\newcommand{\conjTrans}{{\ast}}
\newcommand{\M}{s}
\newcommand{\N}{t}
\newcommand{\K}{r}
\newcommand{\MM}{d_A}
\newcommand{\NN}{d_B}
\newcommand{\LL}{d_{BC}}
\newcommand{\LLi}{d_{BC_i}}
\newcommand{\CC}{d_C}
\newcommand{\CCi}{d_{C_i}}
\newcommand{\p}{p}
\newcommand{\ket}[1]{\left| #1 \right>} 
\newcommand{\bra}[1]{\left< #1 \right|} 
\newcommand{\ketbra}[2]{|#1\rangle\!\langle#2|}
\newcommand{\proj}[1]{|#1\rangle\!\langle#1|}
\newcommand{\id}{\leavevmode\hbox{\small1\normalsize\kern-.33em1}}
\newcommand{\tr}{\mathrm{tr}}
\newcommand{\ot}{\otimes}
\newcommand{\cI}{\mathcal{I}}
\newcommand\restr[2]{{
  \left.\kern-\nulldelimiterspace 
  #1 
  \vphantom{\big|} 
  \right|_{#2} 
  }}
  \newcommand{\RM}[1]{\MakeUppercase{\romannumeral #1{}}}
\newtheorem{thm}{Theorem}
\newtheorem{prop}[thm]{Proposition}
\newtheorem{lem}[thm]{Lemma}
\newenvironment{proof1}[1][Proof]{\noindent\textbf{#1.} }{}
\theoremstyle{definition}
\newtheorem{defi}{Definition}
\newtheorem{rmk}{Remark}
\begin{document}

\title{Smooth Manifold Structure for Extreme Channels}
\author{Raban~Iten} 
\email{itenr@ethz.ch} 
\affiliation{Institute for Theoretical Physics, ETH Zurich, Switzerland}  
\author{Roger~Colbeck} 
\email{roger.colbeck@york.ac.uk} 
\affiliation{Department of Mathematics, University of York, YO10 5DD, UK}

\date{$25^{\textrm{th}}$ September 2019}

\begin{abstract}
  A quantum channel from a system $A$ of dimension $\MM$ to a system
  $B$ of dimension $\NN$ is a completely positive trace-preserving map
  from complex $\MM \times \MM$ to $\NN \times \NN$ matrices, and the
  set of all such maps with Kraus rank $\K$ has the structure of a
  smooth manifold. We describe this set in two ways. First, as a
  quotient space of (a subset of) the $\K \NN\times \MM$ dimensional
  Stiefel manifold. Secondly, as the set of all Choi-states of a fixed
  rank $\K$. These two descriptions are topologically equivalent. This
  allows us to show that the set of all Choi-states corresponding to extreme channels from system
  $A$ to system $B$ of a fixed Kraus rank $\K$ is a smooth submanifold
  of dimension $2\K\MM\NN-\MM^2-\K^2$ of the set of all Choi-states of
  rank $\K$. As an application, we derive a lower bound on the number
  of parameters required for a quantum circuit topology to be able to
  approximate all extreme channels from $A$ to $B$ arbitrarily well.
\end{abstract}

\maketitle

\section{Introduction}
We describe the differential structure of the set
$\mathcal{E}_{\M,\N,\K}$ consisting of all completely positive
trace-preserving (CPTP) maps from $\mathbb{C}^{\M \times \M}$ to
$\mathbb{C}^{\N \times \N}$ (which we refer to as $\M$ to $\N$
channels) of fixed Kraus rank $\K$. A linear map $\mathcal{E}:
\mathbb{C}^{\M \times \M}\mapsto \mathbb{C}^{\N \times \N}$ is called
positive if it sends positive semi-definite matrices to positive
semi-definite matrices. It is called completely positive (CP) if
$\mathcal{E}\otimes \cI_{\p}$ is positive for all $\p \in \mathbb{N}$,
where $\cI_{\p}:\mathbb{C}^{\p\times\p}\mapsto
\mathbb{C}^{\p\times\p}$ denotes the identity
channel. Choi~\cite{choi} showed that a map $\mathcal{E}$ is
completely positive if and only if it admits an expression
$\mathcal{E}(X)=\sum_{i=1}^{\K'} A_i X A_i^{\conjTrans}$ (for all $X
\in \mathbb{C}^{\M \times \M}$), where the $A_i \in \mathbb{C}^{\N
  \times \M}$ are called Kraus operators in quantum information
theory~\cite{Kraus}. The Kraus representation is not unique in general
and the minimum number of Kraus operators $\K'$, such that a
representation of the form above exists, is called the Kraus rank $\K$
of the map $\mathcal{E}$ (and the corresponding representation is
called a `minimal' Kraus representation).  Note that, by Remark 4
of~\cite{choi}, a Kraus representation is minimal
if and only if the Kraus operators $A_1, A_2,\dots ,A_{\K'}$ are linearly independent. Finally, a map $\mathcal{E}$ is called trace preserving if $\tr \, \mathcal{E}(X)=\tr \,X$ for all $X \in \mathbb{C}^{\M \times \M}$, which corresponds to the requirement $\sum_{i=1}^{\K} A_i^{\conjTrans}A_i=I$ on the Kraus operators. \\

CPTP maps are of interest in physics, because they describe the most
general evolution  a quantum system can undergo. Since the set
$\mathcal{E}_{\M,\N}$ of all $\M$ to $\N$ quantum channels   is
convex, one can investigate the decomposition of a quantum channel
into a convex combination of extreme channels. In particular, such
decompositions can help to implement quantum channels in a cheaper
way~\cite{Channels, Wang_qubit}. However, there are open questions
about the structure of the (closure of the) set of extreme channels
and finding convex decompositions  into such channels. In particular,
a tight bound on the number of generalized extreme channels, i.e.,
channels which lie in the closure of the set of all extreme channels,
required for such a convex decomposition is not
known~\cite{Ruskai}. The set of extreme channels has been described by
Friedland and Loewy~\cite{Friedland} using the framework of semi-algebraic geometry. In contrast, we consider the set of extreme channels in the framework of differential geometry. In other words, this work focuses on assigning a smooth structure to the set of all extreme channels and we refer to~\cite{Friedland} for other interesting properties of this set. \\

The paper is structured as follows. First we give an overview of the
notation used in the paper. In Section~\ref{sec:mfd} we describe the
smooth manifold\footnote{We do not require a manifold to be connected.} structure of the set
$\mathcal{E}^{\textnormal{e}}_{\M,\N,\K}$ of $\M$ to $\N$ extreme
channels of a fixed Kraus rank $\K$: First, we adapt the
characterization of unital\footnote{A channel $\mathcal{E}$ is unital
  if $\mathcal{E}(I)=I$, where $I$ denotes the identity.} extreme
channels given by Choi~\cite{choi} to trace-preserving channels in
Section~\ref{sec:extreme_channels}. Then, in Section~\ref{sec:Kraus},
we describe the set of channels and extreme channels with the smooth
structure induced by the standard smooth structure on the Kraus
operators. In this picture, we find that
$\mathcal{E}^{\textnormal{e}}_{\M,\N,\K} \subset
\mathcal{E}_{\M,\N,\K}$
is an open subset and hence a smooth submanifold. In
Section~\ref{sec:Choi}, we transfer this topological property (founded
in the Kraus representation picture) to the Choi-state picture, which
will show that $\mathcal{E}^{\textnormal{e}}_{\M,\N,\K}$ can be
considered as a smooth submanifold of the set of all Choi-states of
fixed rank $\K$. In Section~\ref{sec:convex_comb}, we give a rigorous
proof of the known fact~\cite{Ruskai} that every channel can be
decomposed into a finite convex combination of extreme
channels. Finally, we look at an application to quantum information
theory in Section~\ref{sec:lower_bound}, where we derive a lower bound
on the number of parameters required for a quantum circuit topology
for extreme channels, which we have used in~\cite{Channels}.

 \section{Notation and Background}
 
 \subsection{Notation}
We use the notation $[A,B] \in \mathbb{C}^{\N\times (\M_A+\M_B)}$ to denote the (horizontal) concatenation of two matrices $A \in \mathbb{C}^{\N\times \M_A}$ and  $B \in \mathbb{C}^{\N\times\M_B}$, i.e., the first $\M_A$ columns of $[A,B]$ correspond to the columns of $A$ and the $(\M_A+1)$th column to the $(\M_A+\M_B)$th column to the columns of $B$.  And we denote the vertical concatenation of the matrices $A^T$ and $B^T$ by $[A^T;B^T]=[A,B]^T$. For arbitrary $\M,\N,\K \in \mathbb{N}$, we define:

\begin{itemize}
\item $\mathbb{C}^{\M \times \N}$: Complex $\M \times \N$ matrices
\item $H_{\M}$: Hermitian $\M \times \M$ matrices
\item $H_{\M,+}$: Positive semi-definite $\M \times \M$ matrices 
\item $H^{\K}_{\M,+}$: Elements in  $H_{\M,+}$ of rank $\K$
\item $V_{\M,\N}$: Set of all $V \in \mathbb{C}^{\N \times \M}$ s.t. $V^{\conjTrans}V=I$ (i.e., set of all isometries from an $\M$ to a $\N$ dimensional system) 
\item $V_{\M,\N,\K}$: Set of all $V=[A_1;A_2;\dots;A_{\K}] \in V_{\M,\K\N}$, such that the elements in  $\{A_i\}_{i \in \{1,2,\dots,\K\}} \in  \mathbb{C}^{\N \times \M}$  are linearly independent (over $\mathbb{C}$)
\item $U(\M)=V_{\M,\M}$: Unitary $\M \times \M$ matrices
\item $\mathcal{E}_{\M,\N}$: CPTP ($\mathbb{C}$-linear) maps from  $\mathbb{C}^{\M \times \M}$ to $\mathbb{C}^{\N \times \N}$
\item $\mathcal{E}^{\textnormal{v}}_{\M,\N}$: CP and unital ($\mathbb{C}$-linear) maps from  $\mathbb{C}^{\M \times \M}$ to $\mathbb{C}^{\N \times \N}$
\item $\mathcal{E}^{\textnormal{e}}_{\M,\N}$:  Elements in $\mathcal{E}_{\M,\N}$ that are extreme
\item $\mathcal{E}_{\M,\N,\K}$: Elements in $\mathcal{E}_{\M,\N}$ of Kraus rank $\K$ 
\item $\mathcal{E}_{\M,\N,\leqslant \K}=\bigcup_{j=1}^{\K}  \mathcal{E}_{\M,\N,j}$: Elements in $\mathcal{E}_{\M,\N}$ with Kraus rank at most $\K$ 
\item $\mathcal{E}^{\textnormal{e}}_{\M,\N,\K}= \mathcal{E}^{\textnormal{e}}_{\M,\N} \cap \mathcal{E}_{\M,\N,\K}$: Elements in $\mathcal{E}_{\M,\N,\K}$ that are extreme in $\mathcal{E}_{\M,\N}$
\item $\mathcal{C}_{\M,\N}$: Set of all Choi-states corresponding to channels from an $\M$ dimensional system $A$ to a $\N$ dimensional system $B$, i.e., $C_{AB}\in H_{\M\N,+}$, such that $\tr_B \,C_{AB}= \frac{1}{\M}I$
\item $\mathcal{C}^{\textnormal{e}}_{\M,\N}$: Elements in $\mathcal{C}_{\M,\N}$ that are extreme
\item $\mathcal{C}_{\M,\N,\K}$: Elements in $\mathcal{C}_{\M,\N}$ with rank $\K$
\item $ \mathcal{C}_{\M,\N,\leqslant \K}=\bigcup_{j=1}^{\K} \mathcal{C}_{\M,\N,j}$: Elements in  $\mathcal{C}_{\M,\N}$ with rank at most $\K$
\item $\mathcal{C}^{\textnormal{e}}_{\M,\N,\K}$: Elements in $\mathcal{C}_{\M,\N,\K}$ that are extreme in  $\mathcal{C}_{\M,\N}$
\end{itemize}

 \subsection{Restriction of the Domain or Image of a Smooth Map}
 
The following propositions give sufficient conditions for a map to remain smooth when its domain or image is restricted.

\begin{prop} [Theorem 5.27 of~\cite{Lee}] \label{prop:8.22}
Let $M$ and $N$ be smooth manifolds (with or without boundary). If $F:M \mapsto N$ is a smooth map and $D \subset M$ is an (immersed or embedded) submanifold, then  $\restr{F}{D}:D \mapsto N$ is smooth.
\end{prop}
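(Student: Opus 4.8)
The plan is to factor the restricted map through the inclusion of $D$ and then appeal to the fact that a composition of smooth maps is smooth.

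First I would recall what it means for $D$ to be a submanifold of $M$: $D$ carries its own topology and smooth structure making it a smooth manifold (with or without boundary) for which the inclusion map $\iota_D\colon D \hookrightarrow M$ is a smooth immersion — and, in the embedded case, additionally a topological embedding. The key point is that this already forces $\iota_D$ to be a \emph{smooth} map; nothing further needs to be checked on the submanifold side. The ``with or without boundary'' proviso only changes which model half-spaces occur in the charts and does not affect this observation.

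Second, I would note that, regarded as maps from $D$ to $N$, one has $\restr{F}{D} = F \circ \iota_D$. Since $F\colon M \to N$ and $\iota_D\colon D \to M$ are both smooth, and the composition of smooth maps between smooth manifolds (with or without boundary) is again smooth, it follows that $\restr{F}{D}$ is smooth.

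I do not expect a genuine obstacle here: the only subtlety is reading $\restr{F}{D}$ as a map out of the \emph{intrinsic} manifold $D$ (equipped with the submanifold smooth structure) rather than as a bare set-theoretic restriction, and once that is understood the argument is immediate from the definition of a submanifold together with functoriality of smoothness under composition. A more pedestrian route — using slice charts about a point of $D$ in the embedded case, or the local normal form of an immersion in the immersed case, to write $\iota_D$ explicitly in coordinates and then compose with a chart representation of $F$ — would reach the same conclusion, but essentially just re-derives that composition preserves smoothness.
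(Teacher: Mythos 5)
Your argument is correct and is exactly the standard proof: the paper itself states this result without proof, citing Theorem~5.27 of Lee, and Lee's own proof is precisely your factorization $\restr{F}{D}=F\circ\iota_D$ together with smoothness of the inclusion (which holds by definition of an immersed or embedded submanifold) and closure of smooth maps under composition. Nothing further is needed.
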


\begin{prop}  [Corollary 5.30 of~\cite{Lee}]  \label{prop:8.25}
Let $M$ and $N$ be smooth manifolds, and $S \subset N$ be an embedded submanifold. Then any smooth map $F: M \mapsto N$ whose image is contained in $S$ is also smooth as a map from $M$ to $S$.

\end{prop}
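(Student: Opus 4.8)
The plan is to verify smoothness of $F$, regarded as a map into $S$, pointwise by passing to coordinate charts, using in an essential way that $S$ is an \emph{embedded} submanifold (not merely immersed). Write $n=\dim N$ and $k=\dim S$, and let $\iota\colon S\hookrightarrow N$ denote the inclusion. First I would dispose of continuity: since $S$ carries the subspace topology, a map into $S$ is continuous exactly when its composition with $\iota$ is, and $\iota\circ F=F\colon M\to N$ is continuous by hypothesis; hence $F\colon M\to S$ is continuous. This is precisely the point that can fail when $S$ is only immersed, and it is why the embedded hypothesis enters.

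Next, fix $q\in M$ and put $p:=F(q)\in S$. Because $S$ is embedded, there is a slice chart $(U,\varphi)$ for $N$ centered at $p$, so that $\varphi(U)\subset\mathbb{R}^n$ is open and $\varphi(U\cap S)=\varphi(U)\cap(\mathbb{R}^k\times\{0\})$. Let $\pi\colon\mathbb{R}^n\to\mathbb{R}^k$ be the projection onto the first $k$ coordinates; then $\bigl(U\cap S,\,\pi\circ\varphi|_{U\cap S}\bigr)$ is, by definition of the submanifold smooth structure, a chart for $S$. On the source side, since $F$ is continuous, $F^{-1}(U)$ is open and contains $q$, so we may pick a chart $(W,\psi)$ for $M$ with $q\in W\subset F^{-1}(U)$; moreover $F(W)\subset U$ together with $F(W)\subset S$ gives $F(W)\subset U\cap S$.

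The coordinate representation of $F\colon M\to S$ with respect to $(W,\psi)$ and $(U\cap S,\,\pi\circ\varphi|_{U\cap S})$ is then
\begin{equation*}
  \bigl(\pi\circ\varphi|_{U\cap S}\bigr)\circ F\circ\psi^{-1} \;=\; \pi\circ\bigl(\varphi\circ F\circ\psi^{-1}\bigr)\colon\ \psi(W)\to\mathbb{R}^k .
\end{equation*}
Here $\varphi\circ F\circ\psi^{-1}$ is the coordinate representation of $F$ viewed as a map into $N$, hence smooth by hypothesis, and $\pi$ is linear, hence smooth; a composite of smooth maps between open subsets of Euclidean spaces is smooth, so the displayed map is smooth near $\psi(q)$. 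Since $q\in M$ was arbitrary, $F\colon M\to S$ is smooth. (Alternatively, one may quote Theorem~5.29 of~\cite{Lee} for the immersed case and then observe, as above, that continuity into an embedded submanifold is automatic.)

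There is no serious obstacle here, as the statement is standard; the only point worth care is ensuring that $F$ sends the chart domain $W$ into the \emph{single} slice $U\cap S$ with nothing ``spilling out'' of $S$, and this is exactly where we use both that $F(M)\subset S$ and that $S$ is embedded. Everything else reduces to the two routine facts that embedded submanifolds admit slice charts and that linear projections are smooth.
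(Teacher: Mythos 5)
Your proof is correct and is the standard slice-chart argument; note, however, that the paper does not prove this statement at all --- it is imported verbatim as Corollary~5.30 of Lee's \emph{Introduction to Smooth Manifolds}, so there is no in-paper proof to compare against. Your argument matches the textbook's: continuity of $F\colon M\to S$ is automatic because the embedded $S$ carries the subspace topology, and then the coordinate representation in a slice chart is $\pi\circ(\varphi\circ F\circ\psi^{-1})$, smooth as a composite of the smooth representation of $F$ into $N$ with the linear projection $\pi$.
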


\section{Smooth manifold structure for extreme channels} \label{sec:mfd}

\subsection{Characterization of Extreme Channels} \label{sec:extreme_channels}

We want to characterize the set of extreme points  $\mathcal{E}^{\textnormal{e}}_{\M,\N}\subset \mathcal{E}_{\M,\N}$. For this purpose we have to slightly modify Theorem 5 of~\cite{choi}. This modification was also considered in~\cite{Friedland}.

\begin{thm}[Characterization of extreme channels] \label{thm:choi}
Let $\mathcal{E} \in \mathcal{E}_{\M,\N}$ with minimal Kraus representation $\mathcal{E}(X)=\sum_{i=1}^{\K} A_i X A_i^{\conjTrans}$. Then $\mathcal{E}$ is extreme in $\mathcal{E}_{\M,\N}$ if and only if all elements of the set $\{A_i^{\conjTrans}A_j \}_{i,j \in \{1,2,\dots,\K\}}$ are linearly independent.

\end{thm}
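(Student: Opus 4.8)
The strategy is to reduce the statement about extreme points of $\mathcal{E}_{\M,\N}$ (trace-preserving channels) to Choi's original characterization of extreme points of $\mathcal{E}^{\textnormal{v}}_{\M,\N}$ (unital CP maps), by finding a suitable affine bijection between the two constraint sets, or else to re-run Choi's perturbation argument directly in the trace-preserving setting. I would favour the direct approach, since the trace-preservation constraint $\sum_i A_i^{\conjTrans} A_i = I$ is exactly the ``transpose'' of the unitality constraint $\sum_i A_i A_i^{\conjTrans} = I$ that Choi handles, so his argument carries over with the roles of $A_i$ and $A_i^{\conjTrans}$ interchanged; the set $\{A_i A_j^{\conjTrans}\}$ in his Theorem 5 becomes $\{A_i^{\conjTrans} A_j\}$ here.

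First I would recall the standard fact that if $\{A_i\}_{i=1}^{\K}$ and $\{B_i\}_{i=1}^{\K}$ are two minimal Kraus representations of the same map $\mathcal{E}$, then they are related by a unitary change of basis, $B_i = \sum_j u_{ij} A_j$ with $u = (u_{ij}) \in U(\K)$; this is again Remark 4 of~\cite{choi}. Next, suppose $\mathcal{E} = \frac{1}{2}(\mathcal{E}_1 + \mathcal{E}_2)$ is a proper convex decomposition within $\mathcal{E}_{\M,\N}$. Since $\mathcal{E}_1, \mathcal{E}_2 \leq 2\mathcal{E}$ in the sense of completely positive order, one shows their Kraus operators lie in the span of $\{A_i\}$, so one can write $\mathcal{E}_1(X) = \sum_{i,j} A_i\, (H)_{ij}\, X A_j^{\conjTrans}$ for some Hermitian $\K\times\K$ matrix $H$ with $0 \leq H \leq 2I$, and $\mathcal{E}_2$ correspondingly with $2I - H$. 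The trace-preservation of both $\mathcal{E}_1$ and $\mathcal{E}$ then forces $\sum_{i,j} (H - I)_{ij}\, A_j^{\conjTrans} A_i = 0$. Conversely, any Hermitian $H$ with $\sum_{i,j}(H-I)_{ij} A_j^{\conjTrans} A_i = 0$ and $H$ not a multiple of $I$, suitably rescaled to satisfy $0 \le H \le 2I$, yields a nontrivial decomposition. Hence $\mathcal{E}$ fails to be extreme precisely when there is a nonzero Hermitian $G = H - I$ with $\sum_{i,j} G_{ij}\, A_j^{\conjTrans} A_i = 0$, i.e., precisely when the $\K^2$ matrices $\{A_j^{\conjTrans} A_i\}$ are linearly dependent over $\mathbb{C}$ (using that a nonzero complex linear dependence among these Hermitian-indexed terms can be rearranged into one with Hermitian coefficient matrix, by taking Hermitian and anti-Hermitian parts).

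The main obstacle is the step asserting that the Kraus operators of any $\mathcal{E}_1 \leq 2\mathcal{E}$ must lie in $\mathrm{span}\{A_i\}$, together with the careful bookkeeping that turns ``dependence with a general complex coefficient matrix'' into ``dependence with a Hermitian, trace-zero coefficient matrix'' without losing or gaining solutions; one has to check that the map $G \mapsto \sum_{ij} G_{ij} A_j^{\conjTrans} A_i$ on Hermitian $G$ has nontrivial kernel iff the full complex-linear span condition fails, and that one can always arrange $\mathrm{tr}\, G = 0$ (indeed $\sum_i A_i^{\conjTrans} A_i = I \neq 0$, so $I$ is itself one of the nontrivial dependencies to exclude — this is why the statement is about linear \emph{in}dependence of all $\K^2$ products, matching the count that an extreme channel's Choi state has the maximal-rank tangent behaviour used later). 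I would close by noting this is exactly the trace-preserving analogue of Theorem 5 of~\cite{choi}, with a remark that the equivalence with~\cite{Friedland}'s semialgebraic description is immediate.
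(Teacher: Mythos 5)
Your proposal is correct, but it takes a genuinely different route from the paper. You re-run Choi's perturbation argument directly in the trace-preserving setting: any CP map dominated by $2\mathcal{E}$ has Kraus operators in $\mathrm{span}\{A_i\}$ (via the range of the Choi matrix), so a proper convex decomposition corresponds to a Hermitian coefficient matrix $H$ with $0\le H\le 2I$, and trace preservation forces $\sum_{i,j}(H-I)_{ij}A_j^{\conjTrans}A_i=0$; non-extremality is then equivalent to a nonzero Hermitian dependency among the $A_j^{\conjTrans}A_i$, which (by splitting an arbitrary complex dependency into Hermitian and anti-Hermitian parts) is equivalent to plain linear dependence. All of these steps are sound, and the one you flag as the main obstacle --- the span containment --- is indeed the crux, but it is the standard Choi-matrix range argument. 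The paper instead proves Proposition~\ref{prop:1}: the adjoint map $\mathcal{E}\mapsto\mathcal{E}^{\conjTrans}$ is an extreme-point-preserving bijection between $\mathcal{E}_{\M,\N}$ and the unital CP maps $\mathcal{E}^{\textnormal{v}}_{\N,\M}$, with Kraus operators $A_i^{\conjTrans}$, and then invokes Theorem~5 of~\cite{choi} as a black box, so that $\{\tilde A_i\tilde A_j^{\conjTrans}\}=\{A_i^{\conjTrans}A_j\}$ immediately gives the claim. The paper's route is shorter and reuses Choi's result wholesale; yours is self-contained and makes the mechanism of the decomposition explicit, at the cost of having to supply the domination lemma yourself. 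One small caution: your closing remark about arranging $\tr G=0$ is unnecessary (the argument only needs $G$ Hermitian and nonzero) and the parenthetical about $I$ being ``one of the dependencies to exclude'' is misleading --- since $\sum_i A_i^{\conjTrans}A_i=I\neq 0$, the identity is precisely \emph{not} in the kernel of $G\mapsto\sum_{i,j}G_{ij}A_j^{\conjTrans}A_i$, so nothing needs excluding; I would drop that aside in a final write-up.
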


\begin{rmk}
If the Kraus rank $\K$ of the channel $\mathcal{E}$ in Theorem~\ref{thm:choi} is bigger than $\M$, then $\mathcal{E}$ cannot be extreme, since in this case $|\{A_i^{\conjTrans}A_j \}_{i,j \in \{1,2,\dots,\K\}}|>\textnormal{dim}_{\mathbb{C}}(\mathbb{C}^{\M \times \M})$.
\end{rmk}

\begin{prop} \label{prop:1}
There exists a bijection $\Psi: \mathcal{E}_{\M,\N} \mapsto
\mathcal{E}^{\textnormal{v}}_{\N,\M}$ that sends extreme points of $ \mathcal{E}_{\M,\N} $ to extreme points of $ \mathcal{E}^{\textnormal{v}}_{\N,\M}$ and vice versa.

\end{prop}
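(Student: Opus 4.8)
The plan is to take $\Psi$ to be the Hilbert--Schmidt dual (adjoint) map. Given $\mathcal{E} \in \mathcal{E}_{\M,\N}$, define $\Psi(\mathcal{E})$ to be the unique linear map $\mathbb{C}^{\N\times\N} \to \mathbb{C}^{\M\times\M}$, which I will write $\mathcal{E}^{\conjTrans}$, determined by $\tr(\mathcal{E}^{\conjTrans}(Y)\,X) = \tr(Y\,\mathcal{E}(X))$ for all $X \in \mathbb{C}^{\M\times\M}$ and $Y \in \mathbb{C}^{\N\times\N}$; it is well defined and $\mathbb{C}$-linear in $\mathcal{E}$ because the bilinear form $(P,Q)\mapsto\tr(PQ)$ on $\mathbb{C}^{n\times n}$ is non-degenerate. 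In particular $\Psi$ is the restriction of a linear isomorphism between the (finite-dimensional) spaces of all linear maps $\mathbb{C}^{\M\times\M}\to\mathbb{C}^{\N\times\N}$ and $\mathbb{C}^{\N\times\N}\to\mathbb{C}^{\M\times\M}$, and its inverse is the dual map taken in the other direction, so that $(\mathcal{E}^{\conjTrans})^{\conjTrans}=\mathcal{E}$.

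Second, I would check that $\Psi$ restricts to a bijection $\mathcal{E}_{\M,\N}\to\mathcal{E}^{\textnormal{v}}_{\N,\M}$. Using any Kraus decomposition $\mathcal{E}(X)=\sum_i A_i X A_i^{\conjTrans}$ with $A_i\in\mathbb{C}^{\N\times\M}$, one computes directly that $\mathcal{E}^{\conjTrans}(Y)=\sum_i A_i^{\conjTrans} Y A_i$, which is again of Kraus form and hence completely positive. Moreover $\mathcal{E}$ is trace preserving iff $\tr(\mathcal{E}^{\conjTrans}(I)\,X)=\tr(I\,\mathcal{E}(X))=\tr X$ for every $X$, i.e.\ iff $\mathcal{E}^{\conjTrans}(I)=I$, i.e.\ iff $\mathcal{E}^{\conjTrans}$ is unital. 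Hence $\Psi(\mathcal{E}_{\M,\N})\subseteq\mathcal{E}^{\textnormal{v}}_{\N,\M}$; running the identical argument for the dual map in the reverse direction gives $\Psi^{-1}(\mathcal{E}^{\textnormal{v}}_{\N,\M})\subseteq\mathcal{E}_{\M,\N}$, and since $\Psi$ and $\Psi^{-1}$ are mutually inverse this makes the restriction a bijection.

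Finally I would transfer extreme points. Both $\mathcal{E}_{\M,\N}$ and $\mathcal{E}^{\textnormal{v}}_{\N,\M}$ are convex, since complete positivity defines a convex cone and the trace-preserving constraint (respectively the unital constraint) is affine, and $\Psi$ is an affine bijection between them because it is the restriction of a linear map. Hence if $\mathcal{E}$ is extreme in $\mathcal{E}_{\M,\N}$ and $\Psi(\mathcal{E})=\tfrac12(\mathcal{F}_1+\mathcal{F}_2)$ with $\mathcal{F}_1,\mathcal{F}_2\in\mathcal{E}^{\textnormal{v}}_{\N,\M}$, then applying $\Psi^{-1}$ (which is also linear) yields $\mathcal{E}=\tfrac12(\Psi^{-1}(\mathcal{F}_1)+\Psi^{-1}(\mathcal{F}_2))$ with both summands in $\mathcal{E}_{\M,\N}$, so $\Psi^{-1}(\mathcal{F}_1)=\Psi^{-1}(\mathcal{F}_2)=\mathcal{E}$ and therefore $\mathcal{F}_1=\mathcal{F}_2=\Psi(\mathcal{E})$; this shows $\Psi(\mathcal{E})$ is extreme, and the converse is the same argument with $\Psi$ and $\Psi^{-1}$ interchanged. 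I do not anticipate a serious obstacle: the only points needing a little care are that the dual map is independent of the chosen Kraus representation (immediate from the pairing definition) and that complete positivity and the trace-preserving/unital duality genuinely survive passing to the adjoint, both of which are transparent in the Kraus picture.
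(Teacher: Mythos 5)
Your proposal is correct and follows essentially the same route as the paper: both define $\Psi$ as the adjoint (Hilbert--Schmidt dual) map, verify via the Kraus form that complete positivity is preserved and that trace preservation corresponds to unitality of the dual, and use the (real-)linearity of the dual together with $(\mathcal{E}^{\conjTrans})^{\conjTrans}=\mathcal{E}$ to transfer extreme points in both directions. The only cosmetic difference is that you define the dual through the non-degenerate bilinear trace pairing while the paper invokes the Fr\'echet--Riesz theorem for the Frobenius inner product; the resulting map is the same on CP maps.
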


\begin{proof}Note first that $\mathbb{C}^{l \times l}$ together with
  the Frobenius inner product is a (finite dimensional) Hilbert space
  for $l \in \mathbb{N}$. Let $\mathcal{E} \in
  \mathcal{E}_{\M,\N}$. Then $\mathcal{E}$ is $\mathbb{C}$-linear by
  definition and bounded. By the Fr\'echet-Riesz representation theorem there exists an injective map $\Psi(\mathcal{E})=\mathcal{E}^{\ast}: \mathcal{E}_{\M,\N} \mapsto \textnormal{B}(\mathbb{C}^{\N \times \N},\mathbb{C}^{\M \times \M})$, where $\textnormal{B}(\mathbb{C}^{\N \times \N},\mathbb{C}^{\M \times \M})$ denotes the set of linear bounded operators from $\mathbb{C}^{\N \times \N}$ to $\mathbb{C}^{\M \times \M}$ and $\mathcal{E}^{\ast}$ denotes the adjoint map of $\mathcal{E}$, i.e., for all $C \in \mathbb{C}^{\M \times \M}$ and  $D \in \mathbb{C}^{\N \times \N}$ we have $\left\langle \mathcal{E}(C),D\right\rangle=\left\langle C,\mathcal{E}^{\ast}(D)\right\rangle$. Let $\{A_i\}_{i \in \{1,2,\dots,\K\}}$ be the Kraus operators of $\mathcal{E}$. By a direct computation one finds that the Kraus operators of $\mathcal{E}^{\ast}$ are $\{A_i^{\conjTrans}\}_{i \in \{1,2,\dots,\K\}}$, and therefore, $\mathcal{E}^{\ast} \in  \mathcal{E}^{\textnormal{v}}_{\N,\M}$. Since  $(\mathcal{E}^{\ast})^{\ast}=\mathcal{E}$, we can set $C=I$ in the adjoint property above to see that $\Psi^{-1}$ sends unital maps  to trace-preserving maps. Therefore the map $\Psi: \mathcal{E}_{\M,\N} \mapsto \mathcal{E}^{\textnormal{v}}_{\N,\M}$ is a bijection.\\
  Assume that $\mathcal{E} \in \mathcal{E}_{\M,\N}$ is not extreme,
  i.e., there exist $\mathcal{E}_1,\mathcal{E}_2 \in
  \mathcal{E}_{\M,\N}$, $\mathcal{E}_1 \neq \mathcal{E} $,
  $\mathcal{E}_2 \neq \mathcal{E} $ and $p\in (0,1)$ s.t.\
  $\mathcal{E}=p\mathcal{E}_1+(1-p)\mathcal{E}_2$. By the linearity of
  the adjoint map, we have
  $\mathcal{E}^{\ast}=p\mathcal{E}^{\ast}_1+(1-p)\mathcal{E}^{\ast}_2$,
  which shows that elements of $\mathcal{E}_{\M,\N}$ that are not
  extreme cannot be mapped to extreme elements of
  $\mathcal{E}^{\textnormal{v}}_{\N,\M}$. The reverse direction
  follows analogously.
\end{proof}

\begin{proof}[Proof of Theorem~\ref{thm:choi}]
  Theorem~5 of~\cite{choi} shows that $\mathcal{E}^{\ast} \in
  \mathcal{E}^{\textnormal{v}}_{\N,\M}$ (with linearly independent
  Kraus operators $\tilde{A}_i \in \mathbb{C}^{\M \times \N}$) is
  extreme if and only if the elements in $\{\tilde{A}_i\tilde{A}_j
  ^{\conjTrans}\}_{i,j \in \{1,2,\dots,\K\}}$ are linearly
  independent. By Proposition~\ref{prop:1}, this leads to a
  characterization of the extreme points in
  $\mathcal{E}_{\M,\N}$. Since the Kraus operators of $\mathcal{E}$
  (where $\mathcal{E}$ denotes the adjoint map of
  $\mathcal{E}^{\ast}$) are $A_i:=\tilde{A}_i^{\conjTrans} \in
  \mathbb{C}^{\N \times \M}$ (cf.\ the proof of
  Proposition~\ref{prop:1}), the map $\mathcal{E} \in
  \mathcal{E}_{\M,\N}$ is extreme if and only if the elements in
  $\{A^{\conjTrans}_iA_j \}_{i,j \in \{1,2,\dots,\K\}}$ are linearly
  independent.
\end{proof}

\subsection{Structure of the Set of Extreme Channels in the Kraus Representation} \label{sec:Kraus}
 
In this section we consider the smooth structure of the set
$\mathcal{E}^{\textnormal{e}}_{\M,\N,\K}$ working with the Kraus
representation of channels. Our first goal is to describe the set
$\mathcal{E}_{\M,\N,\K}$ of $\M$ to $\N$ channels with Kraus rank $\K$. We can assume that $\M\leqslant \K\N$ and $\K\leqslant \M\N$, since $\mathcal{E}_{\M,\N,\K}=\emptyset$ if $\M>\K\N$ (cf.\ Lemma~6 of~\cite{Friedland}) or $\K>\M\N$. Let $\{A_i\}_{i \in \{1,2,\dots,\K\}}$ denote a set of (linearly independent) Kraus operators of $\mathcal{E} \in \mathcal{E}_{\M,\N,\K}$. Then we define $V=[A_1;A_2;\dots;A_{\K}]$ which lies in $V_{\M,\N,\K}$, because $V^{\conjTrans}V=\sum_{i=1}^{\K} A_i^{\conjTrans}A_i=I$. Since the Kraus representation is not unique, we do not have a one-to-one correspondence between $\mathcal{E}_{\M,\N,\K}$ and $V_{\M,\N,\K}$. However, we can exploit the desired correspondence by taking the quotient of  $V_{\M,\N,\K}$ with respect to the unitary freedom of the Kraus operators.\\

In the following, we always assume that $\M\leqslant \K\N$ and
$\K\leqslant \M\N$. The next Lemma is generally known (see for example~\cite{Fuchs(Stiefel manifold)}).

\begin{lem} [Stiefel manifold] \label{lem:Stiefel}
Let $\M\leqslant \N$. Then the Stiefel manifold $V_{\M,\N}$ is a compact, smooth embedded submanifold of $\mathbb{R}^{2 \N \M}$ of dimension $2\M\N-\M^2$.
\end{lem}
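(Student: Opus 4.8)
The plan is to realize $V_{\M,\N}$ as a level set of a smooth map and invoke the regular value (submersion) theorem, then separately argue compactness. Throughout I identify $\mathbb{C}^{\N\times\M}$ with $\mathbb{R}^{2\N\M}$ in the obvious way, so that the ambient space is a Euclidean space of real dimension $2\N\M$. The natural candidate map is
\begin{equation}
  F:\mathbb{C}^{\N\times\M}\to H_{\M},\qquad F(V)=V^{\conjTrans}V,
\end{equation}
where $H_{\M}$, the space of Hermitian $\M\times\M$ matrices, is a real vector space of dimension $\M^2$. Then $V_{\M,\N}=F^{-1}(I)$. The first step is to check that $F$ is smooth, which is immediate since each entry of $V^{\conjTrans}V$ is a real-quadratic polynomial in the real and imaginary parts of the entries of $V$.

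The core step is to show that $I$ is a regular value of $F$, i.e., that $dF_V$ is surjective onto $H_{\M}$ for every $V\in F^{-1}(I)$. I would compute the differential directly: for $W\in\mathbb{C}^{\N\times\M}$ thought of as a tangent vector at $V$,
\begin{equation}
  dF_V(W)=W^{\conjTrans}V+V^{\conjTrans}W.
\end{equation}
Given an arbitrary target $H\in H_{\M}$, I would exhibit a preimage: taking $W=\tfrac12 VH$ (legitimate since $V$ has orthonormal columns, so $V^{\conjTrans}V=I$) gives $dF_V(W)=\tfrac12 H V^{\conjTrans}V+\tfrac12 V^{\conjTrans}VH=\tfrac12 H+\tfrac12 H=H$, using $H^{\conjTrans}=H$. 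Hence $dF_V$ is surjective and $I$ is a regular value. The regular value theorem (e.g.\ the relevant statement in~\cite{Lee}) then gives that $V_{\M,\N}=F^{-1}(I)$ is a smooth embedded submanifold of $\mathbb{R}^{2\N\M}$ of dimension $2\N\M-\M^2$, which also agrees with $2\M\N-\M^2$ as stated. One small caveat to address: $F^{-1}(I)$ is nonempty precisely because $\M\leqslant\N$ (one can fit $\M$ orthonormal vectors in $\mathbb{C}^{\N}$), so the hypothesis $\M\leqslant\N$ is used here; without it the statement about dimension is vacuous or false.

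Finally, compactness: $V_{\M,\N}$ is closed in $\mathbb{R}^{2\N\M}$ as the preimage of the closed set $\{I\}$ under the continuous map $F$, and it is bounded because every $V\in V_{\M,\N}$ has Frobenius norm $\|V\|_F^2=\tr(V^{\conjTrans}V)=\tr I=\M$, so $V_{\M,\N}$ lies in the sphere of radius $\sqrt{\M}$. By Heine--Borel it is compact. I do not anticipate a genuine obstacle here; the only point requiring slight care is the clean choice of preimage $W=\tfrac12 VH$ to witness surjectivity of $dF_V$, and being explicit that the identification of $\mathbb{C}^{\N\times\M}$ with $\mathbb{R}^{2\N\M}$ and of $H_{\M}$ with $\mathbb{R}^{\M^2}$ makes $F$ a smooth map between the relevant Euclidean spaces so that the regular value theorem applies verbatim.
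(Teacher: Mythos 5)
Your proof is correct. The paper does not actually prove this lemma---it states it as ``generally known'' and cites the literature---so there is no in-paper argument to compare against; your regular-value argument (realizing $V_{\M,\N}$ as $F^{-1}(I)$ for $F(V)=V^{\conjTrans}V$ into the $\M^2$-dimensional real space $H_{\M}$, witnessing surjectivity of $dF_V$ via $W=\tfrac12 VH$, and getting compactness from closedness plus the Frobenius-norm bound $\|V\|_F^2=\M$) is the standard complete proof and fills that gap correctly.
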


\begin{prop}  \label{prop:V_M,N,K_is_manifold}
The set $V_{\M,\N,\K}$ is an open subset of $V_{\M,\K\N}$. In particular, $V_{\M,\N,\K}$ is a smooth embedded submanifold of $V_{\M,\K\N}$.

\end{prop}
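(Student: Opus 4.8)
The statement has two parts: that $V_{\M,\N,\K}$ is open in $V_{\M,\K\N}$, and that it is therefore a smooth embedded submanifold. The second part is immediate from the first, since by Lemma~\ref{lem:Stiefel} the Stiefel manifold $V_{\M,\K\N}$ is a smooth embedded submanifold of $\mathbb{R}^{2\K\N\M}$ (note $\M \leqslant \K\N$ is assumed), and any open subset of a smooth manifold is itself a smooth embedded submanifold of the same dimension. So the real content is showing openness.

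To show $V_{\M,\N,\K}$ is open in $V_{\M,\K\N}$, recall that a point $V \in V_{\M,\K\N}$ lies in $V_{\M,\N,\K}$ iff, writing $V = [A_1; A_2; \dots; A_\K]$ with blocks $A_i \in \mathbb{C}^{\N \times \M}$, the $\K$ matrices $A_1, \dots, A_\K$ are linearly independent over $\mathbb{C}$. The plan is to recognise this as a nondegeneracy condition cut out by the nonvanishing of a continuous function. Specifically, stack the $A_i$ as vectors $a_i \in \mathbb{C}^{\N\M}$ (vectorising each block) and form the $\K \times \K$ Gram matrix $G(V)$ with entries $G_{ij} = \langle a_i, a_j\rangle = \tr(A_i^\conjTrans A_j)$. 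The vectors $a_1, \dots, a_\K$ are linearly independent iff $\det G(V) \neq 0$. Since $V \mapsto G(V)$ is a polynomial (hence continuous) map from $V_{\M,\K\N}$ to $\mathbb{C}^{\K \times \K}$ and $\det$ is continuous, the composition $V \mapsto \det G(V)$ is continuous, and $V_{\M,\N,\K}$ is the preimage of the open set $\mathbb{C}\setminus\{0\}$, hence open.

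\begin{proof}
By Lemma~\ref{lem:Stiefel} (using $\M \leqslant \K\N$), the Stiefel manifold $V_{\M,\K\N}$ is a smooth embedded submanifold of $\mathbb{R}^{2\K\N\M}$, and in particular a topological space with the subspace topology. For $V = [A_1;A_2;\dots;A_{\K}] \in V_{\M,\K\N}$ with blocks $A_i \in \mathbb{C}^{\N\times\M}$, define the Hermitian $\K\times\K$ matrix $G(V)$ by $G(V)_{ij} = \tr(A_i^\conjTrans A_j)$. Regarding each $A_i$ as a vector in the Hilbert space $\mathbb{C}^{\N\times\M}$ equipped with the Frobenius inner product, $G(V)$ is the Gram matrix of $A_1,\dots,A_\K$; hence these blocks are linearly independent over $\mathbb{C}$ if and only if $\det G(V) \neq 0$. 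Thus $V_{\M,\N,\K} = \{V \in V_{\M,\K\N} : \det G(V) \neq 0\}$. The entrywise map $V \mapsto G(V)$ is a polynomial in the real and imaginary parts of the entries of $V$, hence continuous on $V_{\M,\K\N}$, and $G \mapsto \det G$ is continuous, so $f\colon V_{\M,\K\N} \to \mathbb{C}$, $f(V) = \det G(V)$, is continuous. Therefore $V_{\M,\N,\K} = f^{-1}(\mathbb{C}\setminus\{0\})$ is open in $V_{\M,\K\N}$. Finally, an open subset of a smooth manifold is a smooth embedded submanifold (of the same dimension), so $V_{\M,\N,\K}$ is a smooth embedded submanifold of $V_{\M,\K\N}$.
\end{proof}

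The only point requiring care is the equivalence between linear independence of the blocks and nonvanishing of the Gram determinant; everything else is the routine observation that the Gram determinant depends continuously on $V$. There is no serious obstacle here — the main thing is to pick a clean nondegeneracy function, and the Gram determinant is the natural choice since it is manifestly polynomial in the matrix entries.
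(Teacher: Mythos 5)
Your proof is correct and follows essentially the same route as the paper: both exhibit $V_{\M,\N,\K}$ as the preimage of an open set under a continuous function that detects linear independence of the blocks $A_1,\dots,A_\K$. The only difference is cosmetic --- the paper uses the vector of all $\K\times\K$ minors of $[\psi(A_1),\dots,\psi(A_\K)]$ and pulls back the complement of the origin in $\mathbb{C}^{\binom{\N\M}{\K}}$, whereas you condense this into the single scalar Gram determinant $\det G(V)$, which is a slightly cleaner choice of nondegeneracy function.
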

\begin{proof}
  We can write all coefficients of a complex $\N\times \M$ matrix in a
  column vector leading to a natural correspondence $\psi:
  \mathbb{C}^{\N\times \M} \mapsto \mathbb{C}^{\N\M}$. Let
  $l={{{\N\M}}\choose{\K}}$. We define the map
  $F:V_{\M,\K\N}\rightarrow \mathbb{C}^{l}$ sending
  $V=[A_1;A_2;\dots;A_{\K}]$ to all $\K\times \K$ minors\footnote{An
    $\K\times \K$ minor of a matrix $D$ is the determinant of an
    $\K\times \K$ sub-matrix of $D$ formed by `deleting' certain rows
    (or columns).} of the matrix
  $[\psi(A_1),\psi(A_2),\dots,\psi(A_{\K})]$ (ordered in an arbitrary
  way). Then the condition that the elements of the set $\{A_i \}_{i
    \in \{1,2,\dots,\K\}}$ are linearly independent reads: $F(V) \neq
  (0,0,\dots,0)$.  Since $F$ is continuous,
  $F^{-1}(\{(0,0,\dots,0)\}^{\textnormal{c}})=V_{\M,\N,\K}$ is open.
\end{proof}

We can use Theorem~21.10 of~\cite{Lee} to describe the manifold structure of the orbit space  $V_{\M,\N,\K}/U(\K)$.

\begin{defi} A group $G$ acts \emph{freely}
  on a set $S$ if the only element of $G$ that fixes any element of
  $S$ is the identity, i.e., for all $p \in S$ and $g \in G$,
  $g\cdot p=p$ implies $g=I$.
\end{defi}

\begin{defi} Let $G$ be a Lie group that acts
  continuously on a manifold $M$. The action is said to be
  \emph{proper} if the map $G\times M \mapsto M \times M$ given by
  $(g,p)\mapsto (g\cdot p, p)$ is a proper map, i.e., the preimage of
  a compact set is compact.
\end{defi}
 
 The following Proposition gives a sufficient condition for a group action to be proper.
 
\begin{prop} [Corollary~21.6 of~\cite{Lee}] \label{prop:9.14}
Any continuous action by a compact Lie group on a manifold is proper.
\end{prop}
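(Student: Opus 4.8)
The plan is to exploit compactness of $G$ together with elementary point-set topology; no differential geometry is needed beyond the fact that a manifold is Hausdorff. Write $\Theta\colon G\times M \to M\times M$ for the map $\Theta(g,p)=(g\cdot p,p)$; we must show that $\Theta^{-1}(K)$ is compact for every compact $K\subseteq M\times M$. Since the action $G\times M\to M$ is continuous by hypothesis and the projection $G\times M\to M$ onto the second factor is continuous, the map $\Theta$ is continuous (a map into a product being continuous as soon as both of its components are). Because $M\times M$ is a manifold it is Hausdorff, so the compact set $K$ is closed, and therefore $\Theta^{-1}(K)$ is a closed subset of $G\times M$.

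It remains to trap $\Theta^{-1}(K)$ inside a compact set. Let $\pi\colon M\times M\to M$ denote the projection onto the second factor and set $L:=\pi(K)$, which is compact as the continuous image of a compact set. If $(g,p)\in\Theta^{-1}(K)$ then $(g\cdot p,p)\in K$, hence $p=\pi(g\cdot p,p)\in L$; thus $\Theta^{-1}(K)\subseteq G\times L$. Since $G$ is compact, $G\times L$ is compact, and a closed subset of a compact space is compact. Therefore $\Theta^{-1}(K)$ is compact, which is precisely the statement that the action is proper.

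There is essentially no hard step here: the argument is purely topological, and the only place the manifold hypothesis intervenes is in guaranteeing the Hausdorff property, so that the compact target set $K$ is closed and its preimage is closed. In fact the same proof shows, more generally, that any continuous action of a compact topological group on a Hausdorff space is proper.
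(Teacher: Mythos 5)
Your proof is correct and is precisely the standard argument: the paper does not prove this statement itself but quotes it as Corollary~21.6 of Lee, whose proof proceeds exactly as yours does (continuity of $\Theta$, closedness of $\Theta^{-1}(K)$ via Hausdorffness, and trapping it in the compact set $G\times\pi(K)$). Your closing observation that the argument needs only a compact topological group and a Hausdorff space is also accurate.
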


\begin{thm} [Quotient Manifold Theorem~\cite{Lee}] \label{thm:quotient}
Suppose a Lie group $G$ acts smoothly, freely, and properly on a smooth manifold $M$. Then the orbit space $M/G$ is a topological manifold of dimension equal to $\dim(M) - \dim(G)$, and has a unique smooth structure with the property that the quotient map $\pi: M \mapsto M/G$ is a smooth submersion.
\end{thm}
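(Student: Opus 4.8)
Since this statement is quoted verbatim from~\cite{Lee}, I only indicate the shape of the standard proof. Write $n=\dim M$ and $k=\dim G$, and let $\pi:M\to M/G$ be the orbit map. First I would record the topological facts about $M/G$. The map $\pi$ is open, because for open $U\subseteq M$ the saturation $\pi^{-1}(\pi(U))=\bigcup_{g\in G} g\cdot U$ is open; hence second countability of $M$ descends to $M/G$. For the Hausdorff property, properness says the map $\Theta:G\times M\to M\times M$, $(g,p)\mapsto(g\cdot p,p)$, is proper, so its image — the orbit equivalence relation $\mathcal{R}\subseteq M\times M$ — is closed, and a quotient by a closed equivalence relation with open quotient map is Hausdorff. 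So $M/G$ is at least a second countable Hausdorff space.

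Next I would show that every orbit is an embedded $k$-dimensional submanifold of $M$ diffeomorphic to $G$. Fix $p\in M$; the orbit map $\theta_p:G\to M$, $g\mapsto g\cdot p$, is smooth, injective (freeness), and equivariant for left translation on $G$, hence of constant rank, and an injective constant-rank map is an immersion by the rank theorem. Moreover $\theta_p$ is proper (its preimage of a compact $K$ is homeomorphic to $\Theta^{-1}(K\times\{p\})$, which is compact by properness), and a proper injective immersion is a closed embedding. Thus $G\cdot p$ is an embedded submanifold, closed in $M$.

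The heart of the argument is the construction of slice charts. Given $p$, choose coordinates centered at $p$ in which $G\cdot p$ appears as a coordinate slice, and let $S\subseteq M$ be the complementary transversal submanifold through $p$, so $\dim S=n-k$ and $T_pM=T_p(G\cdot p)\oplus T_pS$. Then the differential of the restricted action map $G\times S\to M$ at $(e,p)$ is an isomorphism, so this map restricts to a diffeomorphism of a product neighbourhood $G'\times S'$ onto an open subset of $M$. Here is the one place where properness is indispensable: shrinking $S'$, one arranges that each $G$-orbit meets $S'$ in at most one point — a hypothetical failure yields group elements pushing a point between two points of an arbitrarily small transversal while staying over a compact set, which properness (together with freeness) excludes. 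Then $\pi|_{S'}:S'\to M/G$ is an open topological embedding onto a neighbourhood $V$ of $\pi(p)$, and transporting the smooth structure of $S'$ along $(\pi|_{S'})^{-1}$ gives a chart on $M/G$ of dimension $n-k$.

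Finally I would verify smooth compatibility and the submersion property. For two such charts $S',S''$ the transition map is ``flow from $S'$ into $S''$ along the group'', which is smooth because the group element carrying a point of $S'$ to $S''$ depends smoothly on the point (implicit function theorem applied to the smooth action map); this equips $M/G$ with a smooth structure. In the product chart $G'\times S'$ the map $\pi$ is, up to the chart homeomorphisms, the projection onto $S'$, hence a smooth submersion. Uniqueness is then formal: if two smooth structures on $M/G$ both make $\pi$ a submersion, the identity of $M/G$ between them has smooth composite with $\pi$ (namely $\pi$ itself), so by the characteristic property of surjective submersions it is smooth, and likewise its inverse, whence the structures agree. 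The main obstacle is the slice claim in the third paragraph: everything else is bookkeeping, but separating the ``at most one intersection with a small transversal'' behaviour from orbit wrapping is precisely what the properness hypothesis is for.
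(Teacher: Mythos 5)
The paper does not prove this statement: it is imported verbatim as Theorem~21.10 of~\cite{Lee} and used as a black box. Your sketch is a faithful outline of the standard argument from that reference --- openness of $\pi$ via saturations, Hausdorffness from closedness of the orbit relation, orbits as proper injective immersions, slice charts with properness invoked exactly where it is needed (to force small transversals to meet each orbit at most once), and uniqueness via the characteristic property of surjective submersions --- so it is correct and consistent with the source the paper relies on.
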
 

\begin{lem} [Lemma~21.1 of~\cite{Lee}] \label{lem:9.15}
For any continuous action of a topological group $G$ on a topological space $M$, the quotient map $\pi: M \mapsto M / G$ is open.
\end{lem}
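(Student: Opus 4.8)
The plan is to argue directly from the definition of the quotient topology on $M/G$, using the fact that each group element acts on $M$ by a homeomorphism. Fix an arbitrary open set $U \subseteq M$; the goal is to show that $\pi(U)$ is open in $M/G$. By the defining property of the quotient topology, this holds precisely when the saturation $\pi^{-1}(\pi(U))$ is open in $M$, so everything reduces to analysing this set.

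The main step is the set-theoretic identity
\[
  \pi^{-1}(\pi(U)) \;=\; \bigcup_{g \in G} g \cdot U ,
\]
which follows by unwinding definitions: a point $p$ lies on the left-hand side exactly when its orbit meets $U$, i.e.\ when $g \cdot p \in U$ for some $g \in G$, equivalently $p \in g^{-1}\cdot U$. For each fixed $g$, the translation map $\theta_g : M \mapsto M$, $p \mapsto g\cdot p$, is continuous, being the restriction of the continuous action map $G \times M \mapsto M$ to $\{g\}\times M$, and it has continuous inverse $\theta_{g^{-1}}$; hence $\theta_g$ is a homeomorphism and $g\cdot U = \theta_g(U)$ is open. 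Therefore $\pi^{-1}(\pi(U))$ is a union of open sets, hence open, and so $\pi(U)$ is open. Since $U$ was an arbitrary open set, $\pi$ is an open map.

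I do not anticipate a genuine obstacle here: the only points requiring a little care are the displayed identity for the saturation and the observation that the maps $\theta_g$ are homeomorphisms rather than merely continuous, and both are immediate from the group-action axioms together with continuity of the action. Note also that the argument uses neither a smooth structure nor properness or freeness of the action; it is purely topological, as the statement suggests.
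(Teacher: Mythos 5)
Your proof is correct and is the standard argument (the same one given in Lee, which the paper simply cites for this lemma without reproducing a proof): the saturation identity $\pi^{-1}(\pi(U))=\bigcup_{g\in G} g\cdot U$ together with the fact that each $\theta_g$ is a homeomorphism immediately gives openness of the saturation, hence of $\pi(U)$. No gaps.
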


\begin{prop}  \label{prop:action_U(\K)}
The Lie group $U(\K)$ acts smoothly, freely and properly on the
manifold $V_{\M,\N,\K}$ by the action $U\cdot V=(U\otimes I)V$, where
$U \in U(\K)$, $V \in V_{\M,\N,\K}$ and  $I$ denotes the $\N \times
\N$ identity matrix.
\end{prop}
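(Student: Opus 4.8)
The plan is to verify in turn that the formula defines an action (well-definedness plus the two group-action axioms), that it is smooth, that it is free, and that it is proper; the only step that is more than bookkeeping is freeness, which is precisely where the linear-independence constraint cutting $V_{\M,\N,\K}$ out of the Stiefel manifold $V_{\M,\K\N}$ enters.

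First I would check that $U\cdot V=(U\ot I)V$ lands in $V_{\M,\N,\K}$. Writing $V=[A_1;A_2;\dots;A_{\K}]$, the $i$-th $\N\times\N$ block of $(U\ot I)V$ is $A_i'=\sum_j U_{ij}A_j$. Since $U\ot I$ is unitary, $((U\ot I)V)^{\conjTrans}(U\ot I)V=V^{\conjTrans}V=I$, so $(U\ot I)V\in V_{\M,\K\N}$; and since $U$ is invertible, $\sum_i c_iA_i'=0$ forces $c^{T}U=0$ and hence $c=0$, so the $A_i'$ remain linearly independent and $(U\ot I)V\in V_{\M,\N,\K}$. The axioms $I\cdot V=V$ and $(U_1U_2)\cdot V=U_1\cdot(U_2\cdot V)$ follow at once from $(U_1U_2)\ot I=(U_1\ot I)(U_2\ot I)$.

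For smoothness of the map $U(\K)\times V_{\M,\N,\K}\mapsto V_{\M,\N,\K}$, $(U,V)\mapsto(U\ot I)V$, I would observe that it is the restriction of a polynomial (hence smooth) map between the ambient Euclidean spaces, since both the Kronecker product and matrix multiplication are polynomial in the entries. By Lemma~\ref{lem:Stiefel} and Proposition~\ref{prop:V_M,N,K_is_manifold}, $U(\K)\times V_{\M,\N,\K}$ is an embedded submanifold of the relevant Euclidean space, so the restriction is smooth as a map into $\mathbb{R}^{2\K\N\M}$ by Proposition~\ref{prop:8.22}; since its image lies in the embedded submanifold $V_{\M,\N,\K}$, Proposition~\ref{prop:8.25} upgrades this to smoothness as a map into $V_{\M,\N,\K}$.

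Freeness is the heart of the argument: if $U\cdot V=V$, comparing blocks gives $\sum_j(U_{ij}-\delta_{ij})A_j=0$ for each $i$, and because $A_1,\dots,A_{\K}$ are linearly independent over $\mathbb{C}$ — the very property defining $V_{\M,\N,\K}$ — all coefficients vanish, i.e.\ $U=I$. Properness is then immediate: $U(\K)$ is a compact Lie group and the action is continuous (being smooth), so Proposition~\ref{prop:9.14} applies. I anticipate no genuine obstacle; the single point worth stressing is that freeness really does use the linear-independence condition — on the full Stiefel manifold $V_{\M,\K\N}$ the stabilizer of a degenerate tuple of operators can be nontrivial, so the action there would fail to be free, and it is exactly the passage to $V_{\M,\N,\K}$ that repairs this.
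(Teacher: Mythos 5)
Your proposal is correct and follows essentially the same route as the paper: well-definedness via unitarity of $U\otimes I$ plus preservation of linear independence, freeness from the linear independence of the blocks $A_i$, smoothness by restricting the ambient polynomial map using Propositions~\ref{prop:8.22} and~\ref{prop:8.25}, and properness from compactness of $U(\K)$ via Proposition~\ref{prop:9.14}. The only additions are your explicit check of the group-action axioms and the closing remark on why freeness fails on all of $V_{\M,\K\N}$, both of which are harmless.
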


\begin{proof}
  We first show that $U\cdot V \in V_{\M,\N,\K}$ for all $U\in U(\K)$ and
  $V\in V_{\M,\N,\K}$. Note that
  $V^{\conjTrans}(U^{\conjTrans}\otimes I)(U\otimes I)V=I$ and that the
  linear independence of the matrices $A_i$ (where
  $V=[A_1;A_2;\dots;A_{\K}]$) is preserved under the unitary action:
  Assume
  $\sum_{i=1}^{\K} \alpha_i \left(\sum_{j=1}^{\K} (U)_{ij}A_j\right)=0$ for
  some coefficients $\alpha_i \in \mathbb{C}$. This is equivalent to
  $\sum_{j=1}^{\K} \left(\sum_{i=1}^{\K} \alpha_i (U)_{ij}\right)A_j=0$
  which implies $\sum_{i=1}^{\K} \alpha_i (U)_{ij}=0$ for all
  $j\in \{1,2, \dots, \K\}$, since the $A_j$ are linearly
  independent. Since $U$ is unitary, this implies $\alpha_i=0$ for
  all $i\in \{1,2,\dots,\K\}$. We conclude that the group action is
  well defined. 
  To show that the action is free, choose a $V \in V_{\M,\N,\K}$ and a
  $U \in U(\K)$ and assume that $U \cdot V=V$. Writing
  $V=[A_1;A_2;\dots;A_{\K}]$, the last equation becomes
  $\sum_{j=1}^{\K}(U)_{ij}A_j=A_i$ for all $i \in \{1,2,\dots,\K\}$.
  Since $V \in V_{\M,\N,\K}$, the $A_j$ are linearly independent and we
  conclude that $(U)_{ij}=\delta_{ij}$ or equivalently $U=I$. To see
  that the action is smooth, consider the map
  $F\left(U,V\right)=(U\otimes I) V$:
  $\mathbb{C}^{\K \times \K} \times \mathbb{C}^{\K\N \times \M} \rightarrow
  \mathbb{C}^{\K\N \times \M}$.\footnote{We
    always identify $\mathbb{C}\cong\mathbb{R}^2$, and hence we can treat $F$
    as a map from $\mathbb{R}^{2\left(\K^2+\K\N\M\right)}$ to
    $\mathbb{R}^{2\K\N\M}$.} Since taking tensor products is a smooth
  operation, the map $F$ is smooth. By Propositions~\ref{prop:8.22}
  and~\ref{prop:V_M,N,K_is_manifold} and Lemma~\ref{lem:Stiefel}, the map
  $\tilde{F}\left(U,V\right)=(U \otimes I)V$:
  $U(\K) \times V_{\M,\N,\K} \mapsto \mathbb{C}^{\K\N \times \M}$ is
  smooth. Then, by Propositions~\ref{prop:8.25}
  and~\ref{prop:V_M,N,K_is_manifold} and Lemma~\ref{lem:Stiefel}, the map
  $F'\left(U,V\right)=(U \otimes I)V$:
  $U(\K) \times V_{\M,\N,\K} \mapsto V_{\M,\N,\K}$ is also smooth.  Since
  the Lie group $U(\K)$ is compact, the action is proper by
  Proposition~\ref{prop:9.14}.
\end{proof}

\begin{defi}
  We define the equivalence relation $\sim$ as follows: Let $V_1,V_2
  \in V_{\M,\N,\K}$. Then $V_1\sim V_2$ if there exists a $U\in
  U(\K)$, such that $U\cdot V_1=V_2$. The orbit space is
  $V_{\M,\N,\K}/U(\K):=\{[V]:V \in V_{\M,\N,\K}\}$ (together with the
  quotient topology).
\end{defi}

\begin{lem} \label{lem:Quotient} The orbit space $V_{\M,\N,\K}/U(\K)$
  is a topological manifold of dimension equal to
  $\dim(V_{\M,\N,\K})-\dim(U(\K))=2\M\K\N-\M^2-\K^2$ with a unique
  smooth structure such that the quotient map $\pi:V_{\M,\N,\K}
  \mapsto V_{\M,\N,\K}/U(\K)$ is a smooth submersion. Moreover, $\pi$
  is an open map.
\end{lem}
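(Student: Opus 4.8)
The plan is to apply the Quotient Manifold Theorem (Theorem~\ref{thm:quotient}) directly, using the ingredients assembled above. First I would record that $V_{\M,\N,\K}$ is a smooth manifold: by Proposition~\ref{prop:V_M,N,K_is_manifold} it is an open subset of the Stiefel manifold $V_{\M,\K\N}$, and since $\M\leqslant \K\N$ by the standing assumption of this section, Lemma~\ref{lem:Stiefel} gives $\dim(V_{\M,\N,\K})=\dim(V_{\M,\K\N})=2\M\K\N-\M^2$. Likewise $U(\K)=V_{\K,\K}$ is a compact Lie group, and Lemma~\ref{lem:Stiefel} (applied with both parameters equal to $\K$) yields $\dim(U(\K))=2\K^2-\K^2=\K^2$.

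Next, Proposition~\ref{prop:action_U(\K)} tells us that $U(\K)$ acts smoothly, freely, and properly on $V_{\M,\N,\K}$ via $U\cdot V=(U\otimes I)V$. Hence all hypotheses of Theorem~\ref{thm:quotient} are met, so the orbit space $V_{\M,\N,\K}/U(\K)$ is a topological manifold of dimension $\dim(V_{\M,\N,\K})-\dim(U(\K))=(2\M\K\N-\M^2)-\K^2=2\M\K\N-\M^2-\K^2$, and it carries a unique smooth structure for which the quotient map $\pi:V_{\M,\N,\K}\mapsto V_{\M,\N,\K}/U(\K)$ is a smooth submersion. Finally, the openness of $\pi$ is immediate from Lemma~\ref{lem:9.15}, since $U(\K)$ is in particular a topological group acting continuously on the topological space $V_{\M,\N,\K}$.

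There is essentially no obstacle remaining at this stage: the substantive work — verifying that the $U(\K)$-action is well defined, free, smooth, and proper — was already carried out in Proposition~\ref{prop:action_U(\K)}, and the differential-geometric content is packaged in the cited theorems of~\cite{Lee}. The only point needing a moment's care is the bookkeeping of the two dimension formulas, together with checking that the dimension hypothesis $\M\leqslant \K\N$ of Lemma~\ref{lem:Stiefel} is in force when computing $\dim(V_{\M,\N,\K})$, which it is by the standing assumptions.
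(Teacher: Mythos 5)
Your proposal is correct and follows exactly the route taken in the paper: invoke Theorem~\ref{thm:quotient} with the hypotheses supplied by Proposition~\ref{prop:action_U(\K)}, and obtain openness of $\pi$ from Lemma~\ref{lem:9.15}. The only difference is that you spell out the dimension bookkeeping via Lemma~\ref{lem:Stiefel} and Proposition~\ref{prop:V_M,N,K_is_manifold}, which the paper leaves implicit.
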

\begin{proof}The first part of the theorem follows from
  Theorem~\ref{thm:quotient}, where the assumption for the theorem are
  satisfied because of Proposition~\ref{prop:action_U(\K)}. The
  quotient map $\pi$ is open by Lemma~\ref{lem:9.15}.
\end{proof}

\begin{lem} \label{lem:channels_isom} There is a one-to-one
  correspondence between the set $\mathcal{E}_{\M,\N,\K}$ of channels
  of Kraus rank $\K$ and the orbit space $V_{\M,\N,\K}/U(\K)$.
\end{lem}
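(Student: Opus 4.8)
The plan is to construct an explicit bijection between $\mathcal{E}_{\M,\N,\K}$ and the orbit space $V_{\M,\N,\K}/U(\K)$ by sending a channel to the equivalence class of the ``stacked'' isometry built from any minimal Kraus representation. Concretely, given $\mathcal{E} \in \mathcal{E}_{\M,\N,\K}$ with a minimal Kraus representation $\mathcal{E}(X)=\sum_{i=1}^{\K} A_i X A_i^{\conjTrans}$, set $V=[A_1;A_2;\dots;A_{\K}]$; as noted just before the statement, $V^{\conjTrans}V=\sum_i A_i^{\conjTrans}A_i=I$ and the $A_i$ are linearly independent by minimality (Remark~4 of~\cite{choi}), so $V \in V_{\M,\N,\K}$. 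I would define $\Phi(\mathcal{E}):=[V] \in V_{\M,\N,\K}/U(\K)$.

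The three things to check are: well-definedness, injectivity, and surjectivity. For \textbf{well-definedness}, I would invoke the standard unitary-freedom theorem for Kraus representations: two families $\{A_i\}_{i=1}^{\K}$ and $\{B_i\}_{i=1}^{\K}$ of the same size are Kraus operators of the same channel if and only if $B_i=\sum_j (U)_{ij}A_j$ for some $U\in U(\K)$ (when both representations are minimal, both must have exactly $\K$ operators, and the connecting matrix is necessarily unitary rather than merely a co-isometry). This says precisely that $B=[B_1;\dots;B_{\K}]=(U\otimes I)V=U\cdot V$, so $[V]$ does not depend on the choice of minimal Kraus representation, and $\Phi$ is well defined. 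For \textbf{injectivity}, if $\Phi(\mathcal{E})=\Phi(\mathcal{E}')$ then representative isometries $V,V'$ satisfy $V'=U\cdot V$ for some $U\in U(\K)$, hence $A_i'=\sum_j (U)_{ij}A_j$, and then $\mathcal{E}'(X)=\sum_i A_i' X (A_i')^{\conjTrans}=\sum_{i,j,k}(U)_{ij}\overline{(U)_{ik}} A_j X A_k^{\conjTrans}=\sum_{j,k}\delta_{jk}A_j X A_k^{\conjTrans}=\mathcal{E}(X)$ by unitarity of $U$, so $\mathcal{E}=\mathcal{E}'$. For \textbf{surjectivity}, given $[V]\in V_{\M,\N,\K}/U(\K)$ with representative $V=[A_1;\dots;A_{\K}]$, define $\mathcal{E}(X):=\sum_{i=1}^{\K} A_i X A_i^{\conjTrans}$; this is CP by Choi's theorem, it is trace preserving since $\sum_i A_i^{\conjTrans}A_i=V^{\conjTrans}V=I$, and its Kraus rank is exactly $\K$ because the $A_i$ are linearly independent (again by Remark~4 of~\cite{choi}, linear independence of the $A_i$ is equivalent to minimality of the representation). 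Hence $\mathcal{E}\in\mathcal{E}_{\M,\N,\K}$ and $\Phi(\mathcal{E})=[V]$.

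The one genuinely delicate point is the well-definedness step, and more precisely the fact that when both representations are minimal the connecting matrix $U$ is honestly unitary (an element of $U(\K)$) rather than a rectangular co-isometry — this is exactly why we may quotient by $U(\K)$ and not a larger object. I would handle this by recalling that if $\{A_i\}_{i=1}^{m}$ and $\{B_j\}_{j=1}^{n}$ (say $m\le n$) generate the same channel then there is an $n\times m$ matrix $W$ with $W^{\conjTrans}W=I$ and $A_i=\sum_j (W)_{ji}B_j$; when $m=n=\K$ (both minimal) the isometry condition forces $W\in U(\K)$. Everything else is a routine linear-algebra verification of the kind already carried out in the proof of Proposition~\ref{prop:action_U(\K)}, so I would keep those computations brief. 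Note that this lemma is purely a statement about sets in bijection; the transport of the smooth/topological structure of $V_{\M,\N,\K}/U(\K)$ (established in Lemma~\ref{lem:Quotient}) onto $\mathcal{E}_{\M,\N,\K}$ via $\Phi$ is the subject of the subsequent development, so it need not be addressed here.
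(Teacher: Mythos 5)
Your proposal is correct and follows essentially the same route as the paper: both define the map sending a channel (via a minimal Kraus representation) to the orbit of the stacked isometry, and both reduce well-definedness and injectivity to the unitary-freedom statement (Remark~4 of~\cite{choi}) and obtain surjectivity by reading off Kraus operators from a representative. The only difference is that you spell out why the connecting matrix between two minimal representations is genuinely an element of $U(\K)$ rather than a co-isometry, a point the paper leaves implicit in its citation of Choi.
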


\begin{proof}We define the quotient map $\pi(V)=[V]$: $V_{\M,\N,\K}
  \mapsto V_{\M,\N,\K}/U(\K)$ and the map $\psi:
  \mathcal{E}_{\M,\N,\K} \mapsto V_{\M,\N,\K}/U(\K)$, by sending a
  channel $\mathcal{E} \in \mathcal{E}_{\M,\N,\K}$ with (linearly
  independent) Kraus operators $\{A_i\}_{i \in \{1,2,\dots,\K\}}$ to
  $\pi\left([A_1;A_2;\dots;A_{\K}]\right)$. To show that this map is
  well defined, we must show that it is independent on the choice of
  the Kraus operators. By Remark 4 of~\cite{choi}, two Kraus
  representations $\{A_i\}_{i \in \{1,2,\dots,\K\}}$ and $\{B_i\}_{i
    \in \{1,2,\dots,\K\}}$ describe the same channel $\mathcal{E} \in
  \mathcal{E}_{\M,\N,\K}$ if and only if there exist a unitary $U \in
  U(\K)$, such that $B_j=\sum_{i=1}^{\K}(U)_{ji}A_i$ for all $j \in
  \{1,2,\dots,\K\}$ or equivalently $V_B=(U\otimes I )V_A $, where
  $V_A=[A_1;A_2;\dots;A_{\K}] \in V_{\M,\N,\K}$ and
  $V_B=[B_1;B_2;\dots;B_{\K}] \in V_{\M,\N,\K}$. Therefore, the two
  Kraus representation describe the same channel if and only if
  $V_A\sim V_B$. We conclude that the map $\psi$ is well defined and
  injective. On the other hand, for all $W\in V_{\M,\N,\K}/U(\K)$, we
  can define a channel $\mathcal{E}\in\mathcal{E}_{\M,\N,\K}$ with
  $\psi(\mathcal{E})=W$ by choosing a representative element $V\in
  \pi^{-1}(W)$, breaking it into blocks $V=[A_1;A_2;\dots;A_{\K}]$ and
  treating those as the channel's Kraus operators. This shows that
  $\psi$ is also surjective.
\end{proof}

We are now ready to study the structure of the set
$\mathcal{E}^e_{\M,\N,\K}$ of extreme channels. In the following we
show that we can identify $\mathcal{E}^e_{\M,\N,\K}$ with a smooth
manifold.

\begin{prop} \label{prop:open} The set
  $\tilde{O}:=\{[A_1;A_2;\dots;A_{\K}] \in V_{\M,\N,\K}:
  \{A_i^{\conjTrans}A_j \}_{i,j \in \{1,2,\dots,\K\}}$ are linearly
  independent$\}$ is an open subset of the manifold $V_{\M,\N,\K}$.
\end{prop}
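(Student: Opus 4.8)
The plan is to realize $\tilde O$ as the preimage of an open set under a continuous map, mirroring the strategy already used in the proof of Proposition~\ref{prop:V_M,N,K_is_manifold}. First I would observe that linear independence of a finite family of vectors in a complex vector space is an open condition expressible via non-vanishing of determinants of maximal minors. Concretely, for $V=[A_1;A_2;\dots;A_{\K}]\in V_{\M,\N,\K}$, consider the $\K^2$ matrices $A_i^{\conjTrans}A_j\in\mathbb{C}^{\M\times\M}$, flatten each into a column vector of length $\M^2$ using the correspondence $\psi:\mathbb{C}^{\M\times\M}\mapsto\mathbb{C}^{\M^2}$, and form the $\M^2\times\K^2$ matrix $G(V)$ whose columns are $\psi(A_i^{\conjTrans}A_j)$ (ordered in some fixed way). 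The set $\{A_i^{\conjTrans}A_j\}_{i,j}$ is linearly independent if and only if $G(V)$ has rank $\K^2$, equivalently if and only if at least one $\K^2\times\K^2$ minor of $G(V)$ is nonzero. Note that this forces $\K^2\le\M^2$, consistent with the Remark following Theorem~\ref{thm:choi}; if $\K>\M$ the set $\tilde O$ is simply empty (hence trivially open) and there is nothing to prove.

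Next I would package these minors into a single continuous map. Let $m={{\M^2}\choose{\K^2}}$ and define $F:V_{\M,\N,\K}\rightarrow\mathbb{C}^{m}$ sending $V$ to the tuple of all $\K^2\times\K^2$ minors of $G(V)$. Each entry of $G(V)$ is a polynomial (in the real and imaginary parts) of the entries of $V$ — indeed $(A_i^{\conjTrans}A_j)_{kl}=\sum_r \overline{(A_i)_{rk}}(A_j)_{rl}$ — so $G$ is continuous (in fact smooth), and each minor is a polynomial in the entries of $G(V)$, hence $F$ is continuous. By the characterization above, $V\in\tilde O$ if and only if $F(V)\neq(0,\dots,0)$, i.e. $\tilde O=F^{-1}\big(\mathbb{C}^{m}\setminus\{(0,\dots,0)\}\big)$. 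Since $\{(0,\dots,0)\}$ is closed in $\mathbb{C}^{m}$, its complement is open, and continuity of $F$ gives that $\tilde O$ is open in $V_{\M,\N,\K}$.

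Finally, since $V_{\M,\N,\K}$ is a smooth embedded submanifold (Proposition~\ref{prop:V_M,N,K_is_manifold}), any open subset of it is again a smooth embedded submanifold of the same dimension, which gives the ``in particular'' part if one wishes to state it. I do not expect any genuine obstacle here: the only point requiring a little care is to state precisely the linear-algebra fact that linear independence of a tuple of vectors is equivalent to non-vanishing of some maximal minor of the matrix formed from them, and to make sure the ordering conventions for the $\K^2$ products $A_i^{\conjTrans}A_j$ and for the list of minors are fixed once and for all so that $F$ is well defined. Everything else is a direct transcription of the argument already given for $V_{\M,\N,\K}$ being open in $V_{\M,\K\N}$.
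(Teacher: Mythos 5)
Your proposal is correct and follows exactly the route the paper intends: the paper's proof of this proposition is literally ``works analogously to the proof of Proposition~\ref{prop:V_M,N,K_is_manifold},'' and your argument is precisely that analogue, flattening the $\K^2$ products $A_i^{\conjTrans}A_j$ into the columns of an $\M^2\times\K^2$ matrix and expressing linear independence as non-vanishing of the tuple of maximal minors of a continuous (polynomial) map. Your explicit treatment of the degenerate case $\K>\M$ (where $\tilde O=\emptyset$) is a small bonus the paper leaves implicit.
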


\begin{proof}Works analogously to the proof of Proposition~\ref{prop:V_M,N,K_is_manifold}.
\end{proof}

\begin{lem} [Manifold structure for $\mathcal{E}^e_{\M,\N,\K}$] \label{lem:Extreme_channels}
Let $O:=\pi(\tilde{O})\subseteq V_{\M,\N,\K}/U(\K)$, where $\pi$ is the quotient map of
Lemma~\ref{lem:Quotient}.  $O$ is a smooth manifold of dimension
$2\M\K\N-\K^2-\M^2$ and there is a one-to-one correspondence between
the set $\mathcal{E}^e_{\M,\N,\K}$ of extreme channels of Kraus rank
$\K$ and $O$. Moreover, $\mathcal{E}^e_{\M,\N,\M}\neq
\emptyset$.
\end{lem}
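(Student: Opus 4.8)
The statement has three parts: (i) $O$ is a smooth manifold of dimension $2\M\K\N-\K^2-\M^2$; (ii) there is a bijection between $\mathcal{E}^e_{\M,\N,\K}$ and $O$; (iii) $\mathcal{E}^e_{\M,\N,\M}\neq\emptyset$. Parts (i) and (ii) should be quick given everything set up already, so the real content is (iii).

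For (i): by Proposition~\ref{prop:open}, $\tilde O$ is open in $V_{\M,\N,\K}$. The unitary action $U\cdot V=(U\otimes I)V$ sends $\tilde O$ into itself (since $(U\otimes I)V$ has Kraus blocks $B_i=\sum_j (U)_{ij}A_j$ and $\{B_i^\conjTrans B_j\}$ spans the same subspace of $\mathbb{C}^{\M\times\M}$ as $\{A_i^\conjTrans A_j\}$ — this is the standard saturation argument, cf. the proof of Theorem~\ref{thm:choi}), so $\tilde O$ is a union of $U(\K)$-orbits. Hence $O=\pi(\tilde O)$ and $\pi^{-1}(O)=\tilde O$. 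Since $\pi$ is open (Lemma~\ref{lem:Quotient}), $O$ is open in $V_{\M,\N,\K}/U(\K)$, so it is a smooth open submanifold of the same dimension $2\M\K\N-\K^2-\M^2$ computed in Lemma~\ref{lem:Quotient}. For (ii): the bijection $\psi$ of Lemma~\ref{lem:channels_isom} restricts to a bijection between $\mathcal{E}^e_{\M,\N,\K}$ and $O$, because by Theorem~\ref{thm:choi} a channel of Kraus rank $\K$ is extreme exactly when its Kraus blocks $[A_1;\dots;A_\K]$ land in $\tilde O$, and $\psi$ maps such a channel to $\pi([A_1;\dots;A_\K])\in\pi(\tilde O)=O$; injectivity is inherited from $\psi$ and surjectivity follows by choosing a representative in $\pi^{-1}(W)\subseteq\tilde O$ for any $W\in O$.

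For (iii), the nonemptiness, I need to exhibit one extreme channel from $A$ to $B$ of Kraus rank exactly $\M$ (the maximal possible rank for an extreme channel, by the Remark after Theorem~\ref{thm:choi}). The natural candidate is a channel whose $\M$ Kraus operators $A_1,\dots,A_\M \in \mathbb{C}^{\N\times\M}$ are chosen so that the $\M^2$ products $\{A_i^\conjTrans A_j\}$ are linearly independent in $\mathbb{C}^{\M\times\M}$ (which forces linear independence of the $A_i$ themselves, so Kraus rank is exactly $\M$) while also satisfying $\sum_i A_i^\conjTrans A_i = I$. A clean construction: take $\N\geq\M$ (we are assuming $\M\leqslant\K\N=\M\N$, i.e. $\N\geqslant 1$; but to fit $\M$ linearly independent $\N\times\M$ matrices we need $\M\N\geqslant\M$, fine, though to get the $\M^2$ products independent we really want room — note however that Kraus rank $\M$ requires $\M\leqslant\M\N$ always, and the products living in $\M\times\M$ space is automatically consistent). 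One concrete choice when $\N\geq\M$: embed $\mathbb{C}^\M$ into $\mathbb{C}^\N$ and let $A_j = \frac{1}{\sqrt{\M}}\,U W^j$ for a fixed isometry-extended unitary, where $W$ is a generalized shift (clock) matrix on $\mathbb{C}^\M$; then $A_i^\conjTrans A_j = \frac1\M W^{j-i}$... that only gives $\M$ independent operators, not $\M^2$. Better: use Weyl--Heisenberg operators — set $A_{(a,b)} = \frac1{\sqrt\M}\, X^a Z^b$ (suitably embedded into $\N\times\M$ blocks, padding with zeros if $\N>\M$), indexed by $(a,b)\in\mathbb{Z}_\M^2$, giving exactly $\M^2$... but that is $\M^2$ Kraus operators, too many. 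The resolution: I want $\M$ Kraus operators whose pairwise products span a $\M^2$-dimensional space; this is generic. So the cleanest proof of (iii) is a genericity/dimension-count argument rather than an explicit formula.

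The main obstacle, then, is (iii), and here is the approach I would actually take. Consider the map $\Phi:V_{\M,\N,\M}\to (\mathbb{C}^{\M\times\M})^{\M^2}$ sending $V=[A_1;\dots;A_\M]$ to the tuple $(A_i^\conjTrans A_j)_{i,j}$; the channel is extreme iff this tuple is linearly independent, an open condition, so it suffices to show $\tilde O\cap V_{\M,\N,\M}\neq\emptyset$, i.e. that the open set $\tilde O$ (inside $V_{\M,\N,\M}$) is nonempty. Since $V_{\M,\N,\M}$ is itself a nonempty manifold (it is open and dense in $V_{\M,\M\N}$ by Proposition~\ref{prop:V_M,N,K_is_manifold}, and $V_{\M,\M\N}$ is a nonempty Stiefel manifold by Lemma~\ref{lem:Stiefel} since $\M\leqslant\M\N$), it is enough to find one point where the linear-independence condition holds, or to invoke that the complementary condition (some nontrivial linear relation among the $A_i^\conjTrans A_j$) is a proper algebraic subvariety. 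The most self-contained route: explicitly write down Kraus operators. Take $\N=\M$ for the construction (the general $\N\geqslant\M$ case follows by composing with the isometric embedding $\mathbb{C}^\M\hookrightarrow\mathbb{C}^\N$, or one notes $\mathcal{E}^e_{\M,\N,\M}\supseteq$ image of $\mathcal{E}^e_{\M,\M,\M}$ under post-composition with an isometry, which preserves extremality and Kraus rank), and use the known fact (this is essentially in Choi's paper / Ruskai et al.) that a generic set of $\M$ matrices in $\mathbb{C}^{\M\times\M}$ subject only to $\sum A_i^\conjTrans A_i=I$ has the $\{A_i^\conjTrans A_j\}$ linearly independent — equivalently, exhibit $A_1,\dots,A_\M$ with, say, $A_k = \sum_{l} c^{(k)}_l E_{?}$ chosen so the Gram-type relations decouple. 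Given the paper's style, I expect the authors to either cite an explicit extreme channel from the literature or give a short perturbative argument; the cleanest honest proof is: the set of $V\in V_{\M,\M\N}$ for which $\{A_i^\conjTrans A_j\}$ is \emph{dependent} is the common zero set of all $\M^2\times\M^2$ minors of a matrix built from the entries of $V$, i.e. a closed set cut out by real-analytic (polynomial) equations; to conclude it is proper it suffices to produce one $V$ outside it, and for $\N=\M$ one can check directly that $A_j = c\,\Pi_j$ perturbed appropriately — or most simply, invoke that random isometries work with probability one. I would therefore present (iii) by first reducing to $\N=\M$ via isometric post-composition, then citing the existence of extreme qudit channels of maximal Kraus rank (standard, e.g. unitary conjugations are the trivial case but have rank $1$; the right example is the set of $\M$ Kraus operators forming a "mutually unbiased" family), and fall back on the genericity argument if no citation is at hand. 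The obstacle is purely in making (iii) clean and explicit rather than hand-wavy.
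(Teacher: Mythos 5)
Your treatment of parts (i) and (ii) matches the paper's: openness of $\pi$ together with openness of $\tilde{O}$ gives that $O$ is an open, hence smooth, submanifold of the quotient of the stated dimension, and the bijection is the restriction of the correspondence of Lemma~\ref{lem:channels_isom} via Theorem~\ref{thm:choi}. Your extra observation that $\tilde{O}$ is a union of $U(\K)$-orbits is a reasonable (and slightly more careful) addition, though it is not needed for the openness of the image under an open map.

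The genuine gap is in part (iii). You never produce a witness for $\mathcal{E}^e_{\M,\N,\M}\neq\emptyset$: your explicit candidates (clock/shift and Weyl--Heisenberg families) fail, as you yourself note, and the fallback genericity argument is circular --- to conclude that the locus where $\{A_i^{\conjTrans}A_j\}$ is dependent is a \emph{proper} subvariety of $V_{\M,\N,\M}$ (so that ``random isometries work with probability one'') you must exhibit at least one point outside it, which is exactly the missing ingredient. The paper's proof (an argument borrowed from Friedland and Loewy) uses rank-one Kraus operators rather than the unitary-like ones you searched among: take $A_i=\ketbra{\psi}{i}$ for $i\in\{1,\dots,\M\}$, where $\ket{\psi}\in\mathbb{C}^{\N}$ is any unit vector. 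Then $\sum_i A_i^{\conjTrans}A_i=\sum_i\proj{i}=I$, the $A_i$ are linearly independent (so the Kraus rank is exactly $\M$), and $A_i^{\conjTrans}A_j=\ketbra{i}{j}$ runs over the standard basis of $\mathbb{C}^{\M\times\M}$, hence these products are linearly independent; Theorem~\ref{thm:choi} then gives extremality. This works for every $\N\geqslant 1$, so your reduction to $\N=\M$ and the isometric-embedding step are also unnecessary. Physically this is just the ``discard the input and prepare $\ket{\psi}$'' channel.
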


\begin{proof}
  Since $\pi$ is an open map and $\tilde{O}$ is an open subset of
  $V_{\M,\N,\K}$, $O$ is an open subset of the orbit space $V_{\M,\N,\K}/U(\K)$. Together with Lemma~\ref{lem:channels_isom} and Theorem~\ref{thm:choi}, this implies the first part of the Lemma. \\
  For the second part, we borrow an argument
  from~\cite{Friedland}. Consider a channel $\mathcal{E}$ with Kraus
  operators $A_i=\ketbra{\psi}{i}$ for $i \in \{1,2,\dots,\M \}$,
  where $\ket{\psi}\in \mathbb{C}^{\N}$ is of unit length and
  $\ket{i}\in \mathbb{C}^{\M}$ denotes the $i$th standard basis
  vector.\footnote{Note that $\bra{\phi} \in \mathbb{C}^{1 \times d} $
    denotes the conjugate transpose of a $d$-dimensional vector
    $\ket{\phi} \in \mathbb{C}^{d \times 1}$.}, Note that
  $\sum_{i=1}^{\M} A_i^{\conjTrans}A_i=I$ and that the elements in the
  set
  $\{A_i^{\conjTrans}A_j \}_{i,j \in \{1,2,\dots,\M\}}=\{\ketbra{i}{j}
  \}_{i,j \in \{1,2,\dots,\M\}}$
  are linearly independent. By Theorem~\ref{thm:choi}, we have
  $\mathcal{E} \in \mathcal{E}^e_{\M,\N,\M}$ and therefore
  $\mathcal{E}^e_{\M,\N,\M}\neq \emptyset$.
\end{proof}
Note that the above shows that the channel corresponding to the
operation of discarding a system (tracing out) and then generating a
new pure state is extremal.
 
 \subsection{Structure of the Set of Extreme Channels in the Choi-State Representation} \label{sec:Choi}
 
 We found a smooth description of the set of extreme channels $\mathcal{E}^e_{\M,\N,\K}$ in Section~\ref{sec:Kraus}. This will allow us to transfer the characterization of extreme channels to the Choi-state representation. 

\begin{lem} [Manifold structure for  $H_{\M,+}^{\K}$] \label{lem:H_{\M,+}^{\K}}
The set  $H_{\M,+}^{\K}$ is a smooth embedded submanifold of $\mathbb{R}^{2\M^2}$ of dimension $2\M\K-\K^2$ .
\end{lem}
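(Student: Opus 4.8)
The plan is to exhibit $H_{\M,+}^{\K}$ explicitly as the image of a surjective submersion from a manifold we already understand, namely (a subset of) a Stiefel-type manifold, and then quotient out the residual unitary freedom exactly as was done for channels in Section~\ref{sec:Kraus}. Concretely, every $P \in H_{\M,+}^{\K}$ can be written as $P = W^{\conjTrans} W$ for some $W \in \mathbb{C}^{\K \times \M}$ of full rank $\K$ (take $W$ from a spectral/Cholesky-type decomposition), and two such $W$ give the same $P$ if and only if they differ by left multiplication by a unitary $U \in U(\K)$. Let $M_{\M,\K}$ denote the set of full-rank matrices in $\mathbb{C}^{\K \times \M}$; this is an open subset of $\mathbb{C}^{\K \times \M} \cong \mathbb{R}^{2\K\M}$ (the non-vanishing-of-some-$\K\times\K$-minor condition, open by the same continuity argument as in Proposition~\ref{prop:V_M,N,K_is_manifold}), hence a smooth $2\K\M$-dimensional manifold. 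The group $U(\K)$ (real dimension $\K^2$) acts on $M_{\M,\K}$ by $U \cdot W = UW$; this action is smooth, free (if $UW = W$ with $W$ full rank then $U = I$), and proper since $U(\K)$ is compact (Proposition~\ref{prop:9.14}). By the Quotient Manifold Theorem~\ref{thm:quotient}, $M_{\M,\K}/U(\K)$ is a smooth manifold of dimension $2\K\M - \K^2$, and it is in bijection with $H_{\M,+}^{\K}$ via $[W] \mapsto W^{\conjTrans}W$.

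It remains to upgrade this abstract quotient manifold structure to the statement that $H_{\M,+}^{\K}$ is a \emph{smooth embedded submanifold of} $\mathbb{R}^{2\M^2}$ (identifying $H_{\M}$ with $\mathbb{R}^{\M^2}$ and then sitting inside $\mathbb{C}^{\M\times\M}\cong\mathbb{R}^{2\M^2}$). For this I would either (i) use the rank condition directly: $H_{\M,+}^{\K}$ is the intersection of the closed set of positive semidefinite matrices with the locally closed rank-exactly-$\K$ locus, and near any point of given rank one can use a Schur-complement chart to present it as a graph; or, more in the spirit of this paper, (ii) observe that the map $q: M_{\M,\K} \to H_{\M}$, $q(W) = W^{\conjTrans}W$, is $U(\K)$-invariant, hence descends to a smooth injective map $\bar q: M_{\M,\K}/U(\K) \to H_{\M}$ whose image is $H_{\M,+}^{\K}$; checking that $\bar q$ is an immersion with locally closed image of the expected dimension then gives the embedded-submanifold conclusion. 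The dimension count $2\K\M - \K^2$ is consistent: $P$ is determined by its range ($\K$ points in the complex Grassmannian $\mathrm{Gr}(\K,\M)$, real dimension $2\K(\M-\K)$) together with a positive definite $\K\times\K$ operator on that range (real dimension $\K^2$), summing to $2\K\M - \K^2$.

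The main obstacle is step (ii)—or more precisely, verifying that the descended map $\bar q$ is an embedding rather than merely an injective immersion, i.e.\ that $H_{\M,+}^{\K}$ carries the subspace topology from $\mathbb{R}^{2\M^2}$. Injectivity and smoothness of $\bar q$ are immediate from the discussion above; that $d\bar q$ has full rank $2\K\M-\K^2$ everywhere amounts to a linear-algebra computation showing that if $\dot P$ is tangent to the rank-$\K$ PSD locus at $P$ then the equation $\dot P = \dot W^{\conjTrans} W + W^{\conjTrans}\dot W$ has a solution $\dot W$ modulo the anti-Hermitian directions $XW$, $X \in \mathfrak{u}(\K)$. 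The properness input (via compactness of $U(\K)$, Proposition~\ref{prop:9.14}) is what makes the quotient Hausdorff and ultimately what lets one conclude the image is embedded rather than just immersed; alternatively one can bypass $\bar q$ entirely and invoke the standard Schur-complement local chart for fixed-rank symmetric matrices, in which case the only work is writing down that chart and checking the transition maps are smooth. I would present the argument via the $U(\K)$-quotient to keep it parallel with Lemmas~\ref{lem:Stiefel}--\ref{lem:Quotient}, flagging the local-chart alternative as a remark.
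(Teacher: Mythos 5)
Your Gram-factor construction is sound as far as it goes: $q(W)=W^{\conjTrans}W$ on the set $M_{\M,\K}$ of full-rank matrices, with the free, smooth, proper left $U(\K)$-action, does give via Theorem~\ref{thm:quotient} an abstract smooth manifold $M_{\M,\K}/U(\K)$ of dimension $2\K\M-\K^2$ in bijection with $H_{\M,+}^{\K}$, and your dimension count agrees with the paper's. The genuine gap is exactly the step you flag and then wave through: upgrading the injective immersion $\bar q$ to an \emph{embedding}. Compactness of $U(\K)$ makes the quotient Hausdorff, but it does not make $\bar q$ a homeomorphism onto its image: $M_{\M,\K}/U(\K)$ is non-compact, so the compact-to-Hausdorff trick is unavailable, and an injective immersion of a non-compact manifold need not be an embedding (nor is $\bar q$ proper into $H_{\M}$, since rank-$\K$ matrices can converge to lower-rank ones). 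What is actually needed, and not supplied, is either a continuous local section $P\mapsto W$ of $q$ over $H_{\M,+}^{\K}$ in its subspace topology (e.g.\ by selecting an invertible $\K\times\K$ principal submatrix and a Cholesky-type factor depending continuously on $P$), or the Schur-complement local chart you mention in passing but do not carry out. You have also deferred, without proof, the claim that $\mathrm{d}\bar q$ has full rank everywhere.

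For comparison, the paper takes a different route that disposes of the embeddedness issue in one stroke: it writes $H_{\M,+}^{\K}$ as the orbit $\{AE_{\K}A^{\conjTrans}: A\in\textnormal{GL}(\mathbb{C},\M)\}$ of the congruence action (Proposition~\ref{prop:Orbit_description_of_H_{\M,+}^{\K}}) and invokes Theorem~\ref{thm:Gibson}, which guarantees that orbits of smooth \emph{semialgebraic} Lie group actions are smooth embedded submanifolds; the dimension is then obtained from a separate constant-rank/tangent-space computation (Proposition~\ref{prop:Tangent_space_of_H_{\M,+}^{\K}}). If you wish to keep your quotient picture, you must either import a comparable black box for embeddedness or do the local-chart argument honestly; as written, the central assertion of the lemma --- that the submanifold is \emph{embedded} --- is not proved.
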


The `real case' of Lemma~\ref{lem:H_{\M,+}^{\K}} was shown
in~\cite{Helmke} (cf.\ also~\cite{Vandereyecken}), where they
considered the manifold of real symmetrical $\M \times \M$ matrices of
rank $\K$. Our proof is a straightforward generalization of the proof given in~\cite{Helmke} to the complex case. We begin with some
preparatory results.

\begin{defi}
We define $E_{\K}$ to be a square matrix,\footnote{The dimension of the matrix will always be clear from the context.} whose first $\K$ diagonal entries are equal to one and all the other entries are equal to zero. 
\end{defi}

\begin{prop}  \label{prop:Orbit_description_of_H_{\M,+}^{\K}}
 We have $H_{\M,+}^{\K}=\{AE_{\K}A^{\conjTrans}:A \in \textnormal{GL}(\mathbb{C},\M)\}$.
\end{prop}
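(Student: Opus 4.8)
The plan is to prove the two set inclusions separately. For the inclusion $\{AE_{\K}A^{\conjTrans}:A \in \textnormal{GL}(\mathbb{C},\M)\} \subseteq H_{\M,+}^{\K}$, I would first note that for any $A \in \textnormal{GL}(\mathbb{C},\M)$ the matrix $AE_{\K}A^{\conjTrans}$ is Hermitian, since $E_{\K}=E_{\K}^{\conjTrans}$, and it is positive semi-definite because for any $x \in \mathbb{C}^{\M}$ we have $x^{\conjTrans}AE_{\K}A^{\conjTrans}x = (A^{\conjTrans}x)^{\conjTrans}E_{\K}(A^{\conjTrans}x) = \sum_{i=1}^{\K}|(A^{\conjTrans}x)_i|^2 \geqslant 0$. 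For the rank, since $A$ is invertible, $\textnormal{rank}(AE_{\K}A^{\conjTrans}) = \textnormal{rank}(E_{\K}) = \K$, using that multiplication by an invertible matrix on either side preserves rank (and $A^{\conjTrans}$ is invertible too). Hence $AE_{\K}A^{\conjTrans} \in H_{\M,+}^{\K}$.

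For the reverse inclusion $H_{\M,+}^{\K} \subseteq \{AE_{\K}A^{\conjTrans}:A \in \textnormal{GL}(\mathbb{C},\M)\}$, I would start from an arbitrary $P \in H_{\M,+}^{\K}$ and use the spectral theorem: there is a unitary $U \in U(\M)$ and a diagonal matrix $D = \textnormal{diag}(\lambda_1,\dots,\lambda_{\K},0,\dots,0)$ with $\lambda_1,\dots,\lambda_{\K}>0$ (the positive eigenvalues, placed first) such that $P = UDU^{\conjTrans}$. Write $D = \Lambda E_{\K} \Lambda^{\conjTrans}$ where $\Lambda = \textnormal{diag}(\sqrt{\lambda_1},\dots,\sqrt{\lambda_{\K}},1,\dots,1)$; note $\Lambda$ is invertible since all its diagonal entries are positive. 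Then $P = U\Lambda E_{\K} \Lambda^{\conjTrans} U^{\conjTrans} = (U\Lambda) E_{\K} (U\Lambda)^{\conjTrans}$, and setting $A = U\Lambda \in \textnormal{GL}(\mathbb{C},\M)$ (a product of invertible matrices) gives $P = AE_{\K}A^{\conjTrans}$, as desired.

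The argument is essentially routine; the only mild subtlety is the bookkeeping of which eigenvalues go where (one must order the eigenvalues of $P$ so that the $\K$ strictly positive ones come first, matched to the support of $E_{\K}$), and the observation that padding $\Lambda$ with ones rather than zeros on the last $\M-\K$ diagonal entries is exactly what makes $\Lambda$ invertible while not affecting the product $\Lambda E_{\K}\Lambda^{\conjTrans}$. I do not anticipate any real obstacle here; this proposition is a preparatory normal-form statement whose point is to later realize $H_{\M,+}^{\K}$ as an orbit of the $\textnormal{GL}(\mathbb{C},\M)$-action $A \cdot P = APA^{\conjTrans}$, en route to the manifold structure claimed in Lemma~\ref{lem:H_{\M,+}^{\K}}.
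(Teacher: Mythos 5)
Your proof is correct and follows essentially the same route as the paper: the forward inclusion is the easy verification, and the reverse inclusion uses the spectral theorem $P=UDU^{\conjTrans}$ with the zero eigenvalues padded by ones (your $\Lambda$ is exactly the paper's $\sqrt{\tilde{D}}$, giving $A=U\sqrt{\tilde{D}}$). No issues.
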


\begin{proof}
  The inclusion ``$\supseteq$'' is clear. To see the inclusion
  ``$\subseteq$'', let $H \in H_{\M,+}^{\K}$. By the spectral theorem
  there exists a $U \in U(\M)$ such that $H=UDU^{\conjTrans}$, where
  $D$ is a $\M \times \M$ matrix with positive diagonal entries
  $d_1,d_2,\dots,d_{\K}$ and zeros elsewhere. We define $\tilde{D}$ as
  the matrix $D$ where we replace the zeroes on the diagonal by
  ones. Then we have $D=\sqrt{\tilde{D}}E_{\K}\sqrt{\tilde{D}}$, and
  hence
  $H=U\sqrt{\tilde{D}}E_{\K}\sqrt{\tilde{D}}U^{\conjTrans}=AE_{\K}A^{\conjTrans}$,
  where we set $A=U\sqrt{\tilde{D}} \in
  \textnormal{GL}(\mathbb{C},\M)$.
\end{proof}

A sufficient condition for orbits of Lie group actions to be smooth
manifolds was given in~\cite{Gibson}.

\begin{defi}
  A map $f:D\rightarrow\mathbb{R}^{\N}$ with $D \subset \mathbb{R}^{\M}$ is
  \emph{semialgebraic} if the graph of $f$ is semialgebraic in
  $\mathbb{R}^{\M} \times \mathbb{R}^{\N}$.
\end{defi}

\begin{thm} [Theorem~B4 of Appendix~B
  of~\cite{Gibson}] \label{thm:Gibson} Let $\Phi: G \times S \mapsto
  S$ be a smooth action of a Lie group G on a smooth manifold S. And
  suppose that the action is semialgebraic. Then all the orbits are
  smooth embedded submanifolds of $S$.\footnote{Note that ``smooth
    submanifolds'' in Theorem~B4 of Appendix~B of~\cite{Gibson}
    corresponds to a ``smooth embedded submanifold'' in our
    terminology.}
\end{thm}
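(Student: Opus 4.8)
The plan is to prove the statement by combining the general theory of Lie group actions with two inputs from real semialgebraic geometry: the Tarski--Seidenberg theorem and the generic smoothness of semialgebraic sets. Fix $p \in S$ and write $\mathcal{O} = \{\Phi(g,p) : g \in G\}$ for its orbit. First I would recall the standard fact (the orbit--stabilizer theorem) that, for \emph{any} smooth action, the orbit $\mathcal{O}$ is an injectively immersed submanifold of $S$, diffeomorphic to $G/G_p$ where $G_p$ is the stabilizer of $p$; in particular $G$ acts transitively on $\mathcal{O}$, and each $g \in G$ induces a diffeomorphism $\Phi(g,\cdot)$ of $S$ that maps $\mathcal{O}$ onto itself. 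This homogeneity is the property I will exploit to upgrade ``immersed'' to ``embedded''.

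The obstruction to an injective immersion being an embedding is purely topological --- the manifold topology on $\mathcal{O}$ may be strictly finer than the subspace topology, as for the irrational flow on the torus --- and this is exactly what the semialgebraic hypothesis rules out. First I would observe that $\mathcal{O}$ is a semialgebraic subset of $S$: working in finitely many semialgebraic charts, $\mathcal{O}$ is the image of the semialgebraic set $G \times \{p\}$ under the semialgebraic map $\Phi$, hence is semialgebraic by the Tarski--Seidenberg theorem. Next I would invoke the cell decomposition of semialgebraic sets, which partitions $\mathcal{O}$ into finitely many smooth embedded semialgebraic submanifolds of $S$; it follows that the locus of points at which $\mathcal{O}$ is locally a smooth embedded submanifold of $S$ is nonempty (indeed dense and open in the subspace topology of $\mathcal{O}$). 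Fix such a regular point $q$, together with an open set $U \subseteq S$ containing $q$ for which $\mathcal{O} \cap U$ is a smooth embedded submanifold of $U$.

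Finally I would spread this local regularity over the whole orbit using homogeneity and conclude with the slice-chart criterion. For an arbitrary $q' \in \mathcal{O}$, transitivity provides $g \in G$ with $\Phi(g,q) = q'$; since $\Phi(g,\cdot)$ is a diffeomorphism of $S$ carrying $\mathcal{O}$ to itself, the open set $\Phi(g,U)$ is a neighbourhood of $q'$ in which $\mathcal{O} \cap \Phi(g,U) = \Phi\bigl(g,\, \mathcal{O}\cap U\bigr)$ is again a smooth embedded submanifold. Thus every point of $\mathcal{O}$ admits a slice chart of $S$, which is precisely the defining condition for $\mathcal{O}$ to be an embedded submanifold of $S$; its dimension necessarily equals the immersed dimension $\dim G - \dim G_p$ once the two topologies coincide. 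I expect the main obstacle to be the semialgebraic-geometry input of the second paragraph, namely establishing that the orbit is semialgebraic and that a semialgebraic set has a nonempty smooth locus of its top dimension. The homogeneity argument that promotes one good point to all points is then routine, and it is exactly the step that fails for non-semialgebraic actions with dense orbits.
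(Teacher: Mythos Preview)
The paper does not prove this theorem at all; it is quoted verbatim from Gibson's monograph and used as a black box, so there is no ``paper's own proof'' to compare against. Your sketch is correct and is essentially the standard argument one finds in Gibson: orbits of smooth actions are always injectively immersed, the semialgebraic hypothesis together with Tarski--Seidenberg forces the orbit to be a semialgebraic set, cell decomposition then produces at least one point where the orbit is locally an embedded submanifold, and transitivity of the $G$-action by ambient diffeomorphisms transports that local slice chart to every other orbit point.

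One small remark: in the second paragraph you assert that the orbit is semialgebraic by applying Tarski--Seidenberg to the image of $G\times\{p\}$ under $\Phi$. For this you need $G\times\{p\}$ itself to be semialgebraic, which follows because the domain of a semialgebraic map (being the projection of its semialgebraic graph) is automatically semialgebraic; it is worth making that explicit, since the paper's Definition of ``semialgebraic map'' only mentions the graph. Apart from that, the argument is complete.
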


\begin{proof}[Proof of Lemma~\ref{lem:H_{\M,+}^{\K}} (part 1)]
  We define the map
  $\Phi(A,H)=AHA^{\conjTrans}:\textnormal{GL}(\mathbb{C},\M) \times
  \mathbb{C}^{\M \times \M} \mapsto \mathbb{C}^{\M \times \M}$. Note
  that $\Phi$ describes a smooth action of the Lie group
  $\textnormal{GL}(\mathbb{C},\M)$ on the smooth manifold
  $\mathbb{C}^{\M \times \M}$. Moreover, $\Phi$ is a semialgebraic
  map: The (complex) graph of $\Phi$ corresponds to the set
  $\{(A,H,\tilde{H})\in (\mathbb{C}^{\M \times \M})^{\times 3}:
  \textnormal{det}(A)\neq 0 \textnormal{ and }
  AHA^{\conjTrans}-\tilde{H}=0\}$. We can embed the complex space
  $(\mathbb{C}^{\M \times \M})^{\times 3}$ into $(\mathbb{R}^{2\M
    \times \M})^{\times 3}$ and rewrite the conditions as real
  polynomial equations.  By Theorem~\ref{thm:Gibson} and
  Proposition~\ref{prop:Orbit_description_of_H_{\M,+}^{\K}}, we
  conclude that the orbit $H_{\M,+}^{\K}=\{AE_{\K}A^{\conjTrans}:A \in
  \textnormal{GL}(\mathbb{C},\M)\}$ is a smooth embedded submanifold
  of $\mathbb{R}^{2\M \times \M}\cong \mathbb{R}^{2\M^2}$. To
  determine the dimension of this manifold, we need an additional
  result.
\end{proof}

\begin{prop} \label{prop:Tangent_space_of_H_{\M,+}^{\K}} Let
  $p \in H_{\M,+}^{\K}$ and write $p=A_pE_{\K}A_p^{\conjTrans}$ for some
  $A_p \in \textnormal{GL}(\mathbb{C},\M)$. Then the tangent space at
  $p$ is given by
  $\textnormal{T}_p H_{\M,+}^{\K}=\{\Delta E_{\K}
  A_p^{\conjTrans}+A_pE_{\K}\Delta^{\conjTrans}: \Delta \in \mathbb{C}^{\M
    \times \M} \}$.
\end{prop}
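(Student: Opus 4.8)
The plan is to realise $H_{\M,+}^{\K}$ as the image of the orbit map $\theta(A) := A E_{\K} A^{\conjTrans}$ and to identify $\textnormal{T}_p H_{\M,+}^{\K}$ with the image of its differential. Regarded first as a map $\textnormal{GL}(\mathbb{C},\M) \to \mathbb{C}^{\M \times \M}$, $\theta$ has image exactly $H_{\M,+}^{\K}$ by Proposition~\ref{prop:Orbit_description_of_H_{\M,+}^{\K}}, and $H_{\M,+}^{\K}$ is an embedded submanifold of $\mathbb{R}^{2\M^2}\cong\mathbb{C}^{\M \times \M}$ by part~1 of the proof of Lemma~\ref{lem:H_{\M,+}^{\K}}; hence, by Proposition~\ref{prop:8.25}, $\theta$ is also smooth as a map from $\textnormal{GL}(\mathbb{C},\M)$ to $H_{\M,+}^{\K}$. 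Since $\textnormal{GL}(\mathbb{C},\M)$ is open in $\mathbb{C}^{\M \times \M}$, we have $\textnormal{T}_{A_p}\textnormal{GL}(\mathbb{C},\M) = \mathbb{C}^{\M \times \M}$, and because $H_{\M,+}^{\K}$ is embedded, the image of $\textnormal{d}\theta_{A_p}$ is a subspace of $\textnormal{T}_p H_{\M,+}^{\K}\subseteq\mathbb{C}^{\M \times \M}$.

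Differentiating the curve $t \mapsto (A_p + t\Delta)E_{\K}(A_p + t\Delta)^{\conjTrans}$ at $t = 0$ gives $\textnormal{d}\theta_{A_p}(\Delta) = \Delta E_{\K} A_p^{\conjTrans} + A_p E_{\K}\Delta^{\conjTrans}$. As $\Delta$ ranges over $\mathbb{C}^{\M \times \M}$, the image of $\textnormal{d}\theta_{A_p}$ is precisely the set on the right-hand side of the claimed identity; by the previous paragraph this set is contained in $\textnormal{T}_p H_{\M,+}^{\K}$, so it remains only to prove the reverse inclusion, i.e.\ that $\theta$ is a submersion at $A_p$.

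To show this I would exploit \emph{equivariance}: for every $A_0 \in \textnormal{GL}(\mathbb{C},\M)$ one has $\theta(A_0 A) = A_0\,\theta(A)\,A_0^{\conjTrans}$, where $A\mapsto A_0A$ is a diffeomorphism of $\textnormal{GL}(\mathbb{C},\M)$ and $H\mapsto A_0HA_0^{\conjTrans}$ is a diffeomorphism of $H_{\M,+}^{\K}$ (it is the restriction of a linear automorphism of $\mathbb{C}^{\M \times \M}$ that preserves $H_{\M,+}^{\K}$ by Proposition~\ref{prop:Orbit_description_of_H_{\M,+}^{\K}}, with inverse $H\mapsto A_0^{-1}H(A_0^{-1})^{\conjTrans}$). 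Hence the rank of $\textnormal{d}\theta$ is the same at every point, so $\theta$ has constant rank. A surjective smooth map of constant rank between connected, second countable manifolds is a submersion: by the rank theorem its image is, near each point, a slice whose dimension equals that rank, so if the rank were strictly smaller than $\dim H_{\M,+}^{\K}$ the image would be a countable union of nowhere dense subsets of $H_{\M,+}^{\K}$, contradicting surjectivity via the Baire category theorem. Therefore $\textnormal{d}\theta_{A_p}$ maps onto $\textnormal{T}_p H_{\M,+}^{\K}$, which together with the computation above yields $\textnormal{T}_p H_{\M,+}^{\K} = \{\Delta E_{\K} A_p^{\conjTrans} + A_p E_{\K}\Delta^{\conjTrans} : \Delta \in \mathbb{C}^{\M \times \M}\}$.

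I expect the submersion step to be the only delicate point, precisely because one cannot here invoke the value of $\dim H_{\M,+}^{\K}$: that dimension is exactly what this proposition will be used to compute (by determining the real dimension of the tangent space just found). The constant-rank-plus-surjectivity argument above circumvents this; alternatively, one can note that $\textnormal{GL}(\mathbb{C},\M)$ acts smoothly and transitively on the manifold $H_{\M,+}^{\K}$, so each orbit map is automatically a smooth submersion by the standard theory of homogeneous spaces. The remaining ingredients — smoothness of $\theta$, the differentiation of the curve, and equivariance — are routine.
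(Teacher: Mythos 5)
Your proof is correct and follows essentially the same route as the paper: both establish the inclusion $\supseteq$ by computing the differential of $A\mapsto AE_{\K}A^{\conjTrans}$, and both obtain $\subseteq$ by showing the map has constant rank (your equivariance identity $\theta(A_0A)=A_0\theta(A)A_0^{\conjTrans}$ is exactly the relation the paper differentiates) and is surjective, hence a submersion. The only difference is cosmetic: the paper cites the global rank theorem (Theorem~4.14 of~\cite{Lee}) for the step ``surjective constant-rank map is a submersion,'' whereas you reprove it inline via the rank theorem and Baire category, which is fine but unnecessary.
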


\begin{proof}
We define the map
$\phi(A)=AE_{\K}A^{\conjTrans}:\textnormal{GL}(\mathbb{C},\M) \rightarrow
H_{\M,+}^{\K}$ (where we used
Proposition~\ref{prop:Orbit_description_of_H_{\M,+}^{\K}} to determine the
image space). Note that the map $\phi$ is smooth, since from the first
part of the proof of Lemma~\ref{lem:H_{\M,+}^{\K}}, the set
$H_{\M,+}^{\K}$ is a smooth embedded submanifold (and hence we can
apply Proposition~\ref{prop:8.25} to the smooth map
$\phi'(A)=AE_{\K}A^{\conjTrans}:\textnormal{GL}(\mathbb{C},\M) \mapsto
\mathbb{C}^{\M \times \M}$). Then the pushforward of $\phi$ at $A_p$
is given by $\textnormal{D}\phi_{A_p}\left(\Delta\right)=\Delta
E_{\K}A_p^{\conjTrans}+A_pE_{\K}
\Delta^{\conjTrans}:\textnormal{T}_{A_p}\textnormal{GL}(\mathbb{C},\M)
\mapsto \textnormal{T}_p H_{\M,+}^{\K}$. The inclusion ``$\supseteq$''
of the claim in Proposition~\ref{prop:Tangent_space_of_H_{\M,+}^{\K}}
follows, because  $\textnormal{T}_{A_p}
\textnormal{GL}(\mathbb{C},\M)\cong \mathbb{C}^{\M \times \M}$. To see
the inclusion ``$\subseteq$'', we show that $\phi$ has constant
rank. To see this, note that the pushforward of $\phi$ at an arbitrary
$A \in \textnormal{GL}(\mathbb{C},\M)$ is related to the pushforward
at the identity in the following way
$\textnormal{D}\phi_{A}\left(\Delta A\right)=\Delta A
E_{\K}A^{\conjTrans}+AE_{\K}
A^{\conjTrans}\Delta^{\conjTrans}=A\left(A^{-1} \Delta A
  E_{\K}+E_{\K}A^{\conjTrans}
  \Delta^{\conjTrans}\left(A^{\conjTrans}\right)^{-1}
\right)A^{\conjTrans}=A\,\textnormal{D}\phi_{I}\left( A^{-1}\Delta
  A\right)A^{\conjTrans}$. This implies that $\tilde{\Delta}  \in
\textnormal{ker}\left(\textnormal{D}\phi_{A}\right)$ if and only if
$A^{-1}\tilde{\Delta}  \in
\textnormal{ker}\left(\textnormal{D}\phi_{I}\right)$ and hence
$\textnormal{dim}\left(\textnormal{ker}\left(\textnormal{D}\phi_{I}\right)\right)=\textnormal{dim}\left(\textnormal{ker}\left(\textnormal{D}\phi_{A}\right)\right)$
for all  $A \in \textnormal{GL}(\mathbb{C},\M)$. By the rank-nullity
theorem, we conclude that
$\textnormal{rank}\left(\textnormal{D}\phi_{I}\right)=\textnormal{rank}\left(\textnormal{D}\phi_{A}\right)$
for all  $A \in \textnormal{GL}(\mathbb{C},\M)$, i.e., $\phi$ has
constant rank. Since $\phi$ is also surjective by
Proposition~\ref{prop:Orbit_description_of_H_{\M,+}^{\K}}, we can
apply the global rank theorem (cf. for example Theorem~4.14
of~\cite{Lee}) to see that $\phi$ is a submersion. In particular, $\textnormal{D}\phi_{A_p}$ is surjective, which shows the inclusion  ``$\subseteq$''.
\end{proof}

We are now ready to determine the dimension of the manifold $H_{\M,+}^{\K}$.\\

\begin{proof}[Proof of Lemma~\ref{lem:H_{\M,+}^{\K}} (part 2)]
  We define the map $\phi$ as in the proof of
  Proposition~\ref{prop:Tangent_space_of_H_{\M,+}^{\K}}. Then
  $\textnormal{D}\phi_{I}(\Delta)=\Delta E_{\K}+E_{\K}
  \Delta^{\conjTrans}$. Since $\phi$ is a submersion (cf.\ proof of
  Proposition~\ref{prop:Tangent_space_of_H_{\M,+}^{\K}}),
  $\textnormal{D}\phi_{I}$ is surjective and hence, by the
  rank-nullity theorem,
  $\textnormal{dim}\left(\textnormal{T}_{E_{\K}}H_{\M,+}^{\K}\right)=\textnormal{dim}\left(\textnormal{image}\left(\textnormal{D}\phi_{I}\right)\right)=\textnormal{dim}\left(\textnormal{T}_{I}
    \textnormal{GL}(\mathbb{C},\M)
  \right)-\textnormal{dim}\left(\textnormal{ker}\left(\textnormal{D}\phi_{I}\right)\right)$,
  where $\textnormal{dim}\left(\textnormal{T}_{I}
    \textnormal{GL}(\mathbb{C},\M) \right)=2\M^2$. Note that $\Delta
  \in \textnormal{ker}\left(\textnormal{D}\phi_{I}\right)$ if and only
  if $\Delta E_{\K}=-E_{\K} \Delta^{\conjTrans}$. Writing $\Delta$ in
  block matrix form
  $\Delta=[\Delta_{11},\Delta_{1,2};\Delta_{21},\Delta_{22}]$, where
  $\Delta_{11} \in \mathbb{C}^{\K \times \K}$, the condition above is
  equivalent to the two conditions $\Delta_{21}=0$ and
  $\Delta_{11}=-\Delta_{11}^{\conjTrans}$. Therefore
  $\textnormal{dim}\left(\textnormal{ker}\left(\textnormal{D}\phi_{I}\right)\right)=2\M(\M-\K)+\K^2$
  and hence
  $\textnormal{dim}\left(\textnormal{T}_{E_{\K}}H_{\M,+}^{\K}\right)=2\M^2-\left(2\M(\M-\K)+\K^2\right)=2\M\K-\K^2$.
\end{proof}

Lemma~\ref{lem:H_{\M,+}^{\K}} allows us to show that the set of all
Choi-states corresponding to channels from an $\M$-dimensional to a
$\N$-dimensional system of Kraus rank $\K$ is a smooth manifold.

\begin{lem} [Manifold structure for
  $\mathcal{C}_{\M,\N,\K}$] \label{lem:C_{M,N,K}} The set
  $\mathcal{C}_{\M,\N,\K}$ is a smooth embedded submanifold of
  $\mathbb{R}^{2\M^2\N^2}$. Its dimension is $2\M\K\N-\K^2-\M^2$.
\end{lem}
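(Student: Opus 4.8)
The plan is to realize $\mathcal{C}_{\M,\N,\K}$ as the image of a natural smooth map from the Choi-state-of-a-channel perspective, namely the map that sends a normalized $\K\N\times\M$ isometry (whose blocks are the Kraus operators) to the corresponding Choi state, and then to identify this image with an open subset of the already-known manifold $H_{\M\N,+}^{\K}$ from Lemma~\ref{lem:H_{\M,+}^{\K}}. Concretely, for $\mathcal{E}\in\mathcal{E}_{\M,\N,\K}$ with Kraus operators $\{A_i\}$, the (unnormalized) Choi operator is $\sum_i (A_i\otimes I_A)\ketbra{\Omega}{\Omega}(A_i\otimes I_A)^{\conjTrans}$ where $\ket{\Omega}=\sum_k\ket{k}_B\ket{k}_A$, and the normalized Choi state $C_{AB}\in\mathcal{C}_{\M,\N}$ satisfies $\tr_B C_{AB}=\frac1\M I_A$ and has rank exactly $\K$ (the rank of the Choi state equals the Kraus rank). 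So $\mathcal{C}_{\M,\N,\K}\subseteq H_{\M\N,+}^{\K}$.

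First I would show that $\mathcal{C}_{\M,\N,\K}$ is precisely the subset of $H_{\M\N,+}^{\K}$ cut out by the \emph{affine linear} constraint $\tr_B C_{AB}=\frac1\M I_A$. The key point is that this is not a generic level-set statement but one where the constraint can be handled cleanly: I would consider the linear map $L:H_{\M\N}\to H_\M$, $C\mapsto \tr_B C$, restricted to the manifold $H_{\M\N,+}^{\K}$, and argue that $\frac1\M I_A$ is a regular value of $\restr{L}{H_{\M\N,+}^{\K}}$ — equivalently, that the differential of $C\mapsto\tr_B C$ is surjective onto $H_\M$ at every $C\in\mathcal{C}_{\M,\N,\K}$ when restricted to $\textnormal{T}_C H_{\M\N,+}^{\K}$. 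Using the tangent-space description from Proposition~\ref{prop:Tangent_space_of_H_{\M,+}^{\K}}, tangent vectors at $C=A_C E_\K A_C^{\conjTrans}$ have the form $\Delta E_\K A_C^{\conjTrans}+A_C E_\K\Delta^{\conjTrans}$, and one computes $\tr_B$ of this; since $C$ has full rank $\M$ on the $A$-system (because $\tr_B C=\frac1\M I_A$ is full rank), there is enough freedom in $\Delta$ to realize any Hermitian $\M\times\M$ matrix as the partial trace, giving surjectivity. Then the regular value theorem (constant-rank / submersion level set theorem, e.g.\ Theorem~5.12 of~\cite{Lee}) yields that $\mathcal{C}_{\M,\N,\K}=(\restr{L}{H_{\M\N,+}^{\K}})^{-1}(\frac1\M I_A)$ is a smooth embedded submanifold of $H_{\M\N,+}^{\K}$, hence of $\mathbb{R}^{2\M^2\N^2}$ by transitivity of the embedded-submanifold relation.

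For the dimension, I would subtract: $\dim\mathcal{C}_{\M,\N,\K}=\dim H_{\M\N,+}^{\K}-\dim H_\M = (2\M\N\K-\K^2)-\M^2 = 2\M\K\N-\K^2-\M^2$, where $\dim H_{\M\N,+}^{\K}=2(\M\N)\K-\K^2$ comes from Lemma~\ref{lem:H_{\M,+}^{\K}} with $\M$ replaced by $\M\N$, and $\dim_{\mathbb{R}} H_\M=\M^2$. This matches the dimension of $O$ in Lemma~\ref{lem:Extreme_channels} and of the orbit space in Lemma~\ref{lem:Quotient}, which is the consistency check one expects since the Kraus picture and the Choi picture describe the same set of channels.

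The main obstacle will be the surjectivity of the differential of the partial-trace constraint on the tangent space $\textnormal{T}_C H_{\M\N,+}^{\K}$ — i.e.\ verifying that $\frac1\M I_A$ is genuinely a regular value. One has to be careful that the rank-$\K$ constraint does not obstruct hitting all of $H_\M$: the saving grace is that any $C\in\mathcal{C}_{\M,\N,\K}$ has marginal $\frac1\M I_A$ which is positive definite of full rank $\M$ on $A$, so $\M\leqslant\K\N$ (consistent with the standing assumption) and the support of $C$ projects onto all of $\mathbb{C}^\M$; writing $C=A_C E_\K A_C^{\conjTrans}$ and choosing $\Delta$ supported appropriately one gets $\tr_B(\Delta E_\K A_C^{\conjTrans}+A_C E_\K\Delta^{\conjTrans})$ ranging over all of $H_\M$. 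A clean alternative, avoiding the regular-value computation entirely, is to transfer smoothness directly from the Kraus picture: the map $V_{\M,\N,\K}\to H_{\M\N,+}^{\K}$ sending $[A_1;\dots;A_\K]$ to the normalized Choi state is smooth, $U(\K)$-invariant, and descends to a bijection $\mathcal{E}_{\M,\N,\K}\cong\mathcal{C}_{\M,\N,\K}$; combined with Lemma~\ref{lem:Quotient}, Lemma~\ref{lem:channels_isom} and Propositions~\ref{prop:8.22}, \ref{prop:8.25} this shows $\mathcal{C}_{\M,\N,\K}$ carries a smooth structure making it diffeomorphic to $V_{\M,\N,\K}/U(\K)$, and a separate check that this structure agrees with the subspace structure inherited from $H_{\M\N,+}^{\K}$ completes the argument — but the regular-value approach is the more self-contained route and is the one I would write up.
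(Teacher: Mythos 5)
Your proposal follows essentially the same route as the paper: there, too, $\mathcal{C}_{\M,\N,\K}$ is realized as the preimage of the regular value $\frac{1}{\M}I$ under the partial trace $\tr_B$ restricted to $H_{\M\N,+}^{\K}$, with surjectivity of the pushforward established via the tangent-space description of Proposition~\ref{prop:Tangent_space_of_H_{\M,+}^{\K}} (the paper's explicit witness being $\Delta=\M(C\otimes I)A_p$, which is exactly the ``appropriately supported'' $\Delta$ you allude to), and the dimension obtained by the same subtraction. The approach and the key ingredients coincide.
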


\begin{proof}
  Define the smooth map $\Psi(H_{AB})=\tr_B \, H_{AB}:H_{\M\N,+}^{\K}
  \mapsto H_{\M}$ (the smoothness follows again from
  Proposition~\ref{prop:8.22} and~\ref{prop:8.25}). By the Regular
  Level Set Theorem (cf.\ Corollary~5.14 of~\cite{Lee}) and because
  $\textnormal{dim}(H_{\M})=\M^2$ and
  $\textnormal{dim}(H_{\M\N,+}^{\K})=2\M\K\N-\K^2$ (cf.\
  Lemma~\ref{lem:H_{\M,+}^{\K}}), it suffices to show that
  $p':=\frac{1}{\M}I \in H_{\M}$ is a regular value of $\Psi$, i.e.,
  that for all $p \in \Psi^{-1}(p')$ the pushforward
  $D\Psi_p:\textnormal{T}_pH_{\M\N,+}^{\K} \mapsto
  \textnormal{T}_{p'}H_{\M}$ is surjective. To see this, choose $p \in
  \Psi^{-1}(p')$ and write $p=A_pE_{\K}A_p^{\conjTrans}$ for some $A_p
  \in \textnormal{GL}(\mathbb{C},\M\N)$. Choose a tangent vector $X'
  \in \textnormal{T}_{p'}H_{\M}\cong H_{\M}$ and write
  $X'=C+C^{\conjTrans}$, where $C=\frac{1}{2}X' \in \mathbb{C}^{\M
    \times \M}$. Define $\Delta:=\M(C \otimes I)A_p \in
  \mathbb{C}^{\M\N \times \M\N}$ and $X=\Delta
  E_{\K}A_p^{\conjTrans}+A_pE_{\K}\Delta^{\conjTrans} \in
  \textnormal{T}_pH_{\M\N,+}^{\K}$ (by
  Proposition~\ref{prop:Tangent_space_of_H_{\M,+}^{\K}}). Since the
  partial trace is a linear (and continuous) map, we have:
  $D\Psi_p(X)=\Psi(X)=\tr_B \, \Delta
  E_{\K}A_p^{\conjTrans}+\left(\tr_B\, \Delta
    E_{\K}A_p^{\conjTrans}\right)^{\conjTrans}$. Using the definition
  of $\Delta$, we have $\tr_B\, \Delta E_{\K}A_p^{\conjTrans}=\M \,
  \tr_B\, (C\otimes I)A_p E_{\K}A_p^{\conjTrans}=\M C\,\tr_B \, A_p
  E_{\K}A_p^{\conjTrans}=\M C\Psi(p)=\M Cp'=C$. We conclude that
  $D\Psi_p(X)=C+C^{\conjTrans}=X'$. Since $X'\in
  \textnormal{T}_{p'}H_{\M}$ was arbitrary, we showed that the
  pushforward $D\Psi_p$ is surjective for any $p \in \Psi^{-1}(p')$.
\end{proof}

To describe the set of extreme channels in the Choi-state
representation, we transfer the description of
Lemma~\ref{lem:Extreme_channels} to $\mathcal{C}_{\M,\N,\K}$.

\begin{defi} \label{defi:Choi_Iso} We use
  $\ket{\gamma}_{A'A}=\frac{1}{\sqrt{s}} \sum_i\ket{i}_{A'}\ot\ket{i}_A \in
  \mathbb{C}^{\M^2}$ to denote the maximally entangled
  state between the $\M$-dimensional system $A$ and a copy of this
  system, denoted by $A'$. We define the \emph{Choi
    map} $\Gamma(\mathcal{E})=\mathcal{I}_{A'}\ot\mathcal{E}(\proj{\gamma}_{A'A}):
  \mathcal{L}_{\M,\N} \mapsto \mathbb{C}^{\M\N \times \M\N}$, where
  $\mathcal{I}_{A'}$ is the identity map on $A'$ and where
  $\mathcal{L}_{\M,\N}$ denotes the set of all linear maps from
  $ \mathbb{C}^{\M \times \M}$ to $ \mathbb{C}^{\N \times \N}$. The
  Choi map sends a channel to its Choi-state.
  \end{defi}

\begin{defi} [Definition of the map $T$] \label{defi:map_T}
Let $V_{A\mapsto CB}\in \mathbb{C}^{\K\N \times \M}$, where the
systems $A$, $B$ and $C$ have the (complex) dimensions $\M$, $\N$ and
$\K$ respectively. We define the linear map $\mathcal{E}_{V_{A\mapsto
    CB}}(M_A)=\tr_C \, V_{A\mapsto CB} M_A V^{\conjTrans}_{A\mapsto CB}:
\mathbb{C}^{\M \times \M} \mapsto \mathbb{C}^{\N \times \N}$. This
allows us to define the smooth map $T(V_{A\mapsto CB})= \Gamma
\left(\mathcal{E}_{V_{A\mapsto CB}}\right):\mathbb{C}^{\K\N \times
  \M}\mapsto \mathbb{C}^{\M\N \times \M\N}$. 
  \end{defi}

  Note that the map $T$ sends a Stinespring dilation $V \in
  V_{\M,\K\N}$ of a channel $\mathcal{E} \in
  \mathcal{E}_{\M,\N,\leqslant \K}$ to the Choi-state representation
  of $\mathcal{E}$.

\begin{lem}  \label{lem:homeomorphism}
The manifolds $V_{\M,\N,\K}/U(\K)$ and $\mathcal{C}_{\M,\N,\K}$ are homeomorphic.
\end{lem}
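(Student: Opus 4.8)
The plan is to build an explicit continuous bijection $\Phi: V_{\M,\N,\K}/U(\K) \to \mathcal{C}_{\M,\N,\K}$ with continuous inverse, by descending the map $T$ of Definition~\ref{defi:map_T} to the quotient. First I would recall from Lemma~\ref{lem:channels_isom} that $V_{\M,\N,\K}/U(\K)$ is in one-to-one correspondence with $\mathcal{E}_{\M,\N,\K}$, and that the Choi map $\Gamma$ of Definition~\ref{defi:Choi_Iso} is a linear (hence continuous) bijection from $\mathcal{E}_{\M,\N}$ onto $\mathcal{C}_{\M,\N}$ that, being linear and rank-preserving in the appropriate sense, restricts to a bijection $\mathcal{E}_{\M,\N,\K} \to \mathcal{C}_{\M,\N,\K}$ (the rank of the Choi-state equals the Kraus rank of the channel). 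Composing, we get a bijection $\mathcal{C}_{\M,\N,\K} \leftrightarrow V_{\M,\N,\K}/U(\K)$; the content of the lemma is that this bijection and its inverse are continuous.

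For the forward direction, I would observe that $T = \Gamma \circ \mathcal{E}_{(\cdot)}$ restricted to $V_{\M,\N,\K}$ takes the same value on each $U(\K)$-orbit: if $V_B = (U\otimes I)V_A$ then $\mathcal{E}_{V_A} = \mathcal{E}_{V_B}$ (tracing out $C$ kills the unitary $U$ on $C$), so $T(V_A) = T(V_B)$. Hence $T|_{V_{\M,\N,\K}}$ factors through the quotient map $\pi$ as $T|_{V_{\M,\N,\K}} = \bar{T}\circ\pi$ for a unique map $\bar T: V_{\M,\N,\K}/U(\K) \to \mathbb{C}^{\M\N\times\M\N}$; since $\pi$ is a quotient map and $T$ is continuous, $\bar T$ is continuous by the universal property of the quotient topology. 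Its image is exactly $\mathcal{C}_{\M,\N,\K}$ (every channel of Kraus rank $\K$ has a Stinespring dilation in $V_{\M,\N,\K}$, and conversely $\mathcal{E}_{V}$ has Kraus rank exactly $\K$ when $V\in V_{\M,\N,\K}$ because the blocks $A_i$ are linearly independent — this is the crucial point where linear independence enters). Since $\mathcal{C}_{\M,\N,\K}$ is an embedded submanifold (Lemma~\ref{lem:C_{M,N,K}}), $\bar T$ is continuous into $\mathcal{C}_{\M,\N,\K}$; and it is a bijection onto $\mathcal{C}_{\M,\N,\K}$ by the identifications above.

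For the inverse direction, I would construct a continuous right inverse locally. The cleanest route: the map $\pi$ is an open continuous surjection (Lemma~\ref{lem:Quotient}), and $T|_{V_{\M,\N,\K}}: V_{\M,\N,\K} \to \mathcal{C}_{\M,\N,\K}$ is continuous, open, and constant on fibres of $\pi$ with fibres of $T$ equal to fibres of $\pi$ (two dilations give the same Choi-state iff they give the same channel iff they differ by $U(\K)$, using Lemma~\ref{lem:channels_isom}). A continuous open surjection whose fibres coincide with those of $\pi$ induces a homeomorphism on the common quotient; concretely, $\bar T$ is open because for $O'\subseteq V_{\M,\N,\K}/U(\K)$ open, $\bar T(O') = T(\pi^{-1}(O'))$ is open in $\mathcal{C}_{\M,\N,\K}$ since $\pi^{-1}(O')$ is open and $T$ is open onto its image. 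A continuous open bijection is a homeomorphism, so $\bar T$ is the desired homeomorphism.

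The main obstacle I anticipate is establishing that $T|_{V_{\M,\N,\K}}$ is an \emph{open} map onto $\mathcal{C}_{\M,\N,\K}$ (equivalently, that the Choi map restricted to rank-$\K$ objects is open, or that $T$ is a submersion onto the embedded submanifold $\mathcal{C}_{\M,\N,\K}$ so that openness follows from the submersion theorem); verifying the rank condition on $DT$ at points of $V_{\M,\N,\K}$, or alternatively arguing openness directly from the local-section structure of the principal-bundle-like map $\pi$ together with continuity of a local choice of dilation from a Choi-state (via a continuous eigendecomposition on the open set where the rank is locally constant), is the step requiring the most care. Everything else — the orbit-invariance of $T$, the factorization through $\pi$, and the bijectivity — is routine given Lemmas~\ref{lem:channels_isom},~\ref{lem:Quotient} and~\ref{lem:C_{M,N,K}} and the elementary properties of the Choi map.
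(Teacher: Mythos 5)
Your treatment of the forward direction --- factoring $T$ through the quotient map $\pi$ to obtain a continuous bijection $\phi$ onto $\mathcal{C}_{\M,\N,\K}$, with bijectivity coming from Lemma~\ref{lem:channels_isom} and the Choi--Jamiolkowski correspondence --- is correct and is exactly what the paper does. The gap is in the inverse direction. Your argument that the induced map is open rests on the claim that ``$T$ is open onto its image,'' but this is never established; indeed you concede in your final paragraph that the openness of $T$ restricted to $V_{\M,\N,\K}$ (equivalently, a submersion property of $T$ onto $\mathcal{C}_{\M,\N,\K}$, or a continuous local choice of dilation from a Choi-state) is the step requiring the most care, and you only list candidate strategies without carrying any of them out. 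Since continuity of $\phi^{-1}$ is the entire non-routine content of the lemma, the proposal as written is incomplete: the inverse-direction paragraph is circular, deducing openness of the quotient of $T$ from openness of $T$ itself, which is the thing to be proved.

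The paper closes this gap by a compactness argument that avoids any rank computation on the differential of $T$. One enlarges the domain to the full quotient $V_{\M,\K\N}/U(\K)$, which is compact as the continuous image of the compact Stiefel manifold $V_{\M,\K\N}$; the induced map $\psi: V_{\M,\K\N}/U(\K)\mapsto\mathcal{C}_{\M,\N,\leqslant \K}$ is then a continuous bijection from a compact space to a Hausdorff space, hence a homeomorphism (Lemma~A.52 of~\cite{Lee}). One then restricts back to rank exactly $\K$: since $V_{\M,\N,\K}$ is open in $V_{\M,\K\N}$ (Proposition~\ref{prop:V_M,N,K_is_manifold}) and the quotient maps are open (Lemma~\ref{lem:9.15}), the induced inclusion $\tilde{\imath}: V_{\M,\N,\K}/U(\K)\mapsto V_{\M,\K\N}/U(\K)$ is a continuous, open injection, and composing the restriction of $\psi^{-1}$ with the inverse of $\tilde{\imath}$ on its image yields the continuous map $\phi^{-1}$. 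If you prefer to complete your own route, you must actually prove that $T$ restricted to $V_{\M,\N,\K}$ is open onto $\mathcal{C}_{\M,\N,\K}$, for instance by verifying that its differential has constant rank equal to $\dim\mathcal{C}_{\M,\N,\K}=2\M\K\N-\M^2-\K^2$ at every point; that is precisely the work your sketch defers.
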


\begin{proof}
  Since the Kraus rank of a channel is equal to the rank of the
  corresponding Choi-state~\cite{choi}, we can consider the map $T$ as
  a map from $V_{\M,\N,\K}$ to $\mathcal{C}_{\M,\N,\K}$. This map (which we
  still denote by $T$) is smooth by Proposition~\ref{prop:8.22}
  and~\ref{prop:8.25}. Let $\pi:V_{\M,\N,\K}\mapsto V_{\M,\N,\K}/U(\K)$
  denote the quotient map introduced in Lemma~\ref{lem:Quotient}. For
  all $U\in U(\K)$ we have
  $T((U\otimes I)V_{A\mapsto CB})=T(V_{A\mapsto CB})$,
  because the unitary action corresponds to a change of the basis of
  the system $C$, which is traced out.\footnote{We can also think of
    the unitary action as exploiting the unitary freedom on the Kraus
    representation, so the channel itself is unchanged under the
    unitary action.} In other words, the map $T$ is constant on the
  fibers of the quotient map $\pi$. By Theorem~4.30 of~\cite{Lee},
  there is a unique smooth map
  $\phi: V_{\M,\N,\K}/U(\K) \mapsto \mathcal{C}_{\M,\N,\K}$, such that
  the following diagram commutes.

\begin{tikzpicture}[description/.style={fill=white,inner sep=2pt}]
\matrix (m) [matrix of math nodes, row sep=3em,
column sep=4em, text height=1.5ex, text depth=0.25ex]
{ V_{\M,\N,\K} &  \\
 {V_{\M,\N,\K}/U(\K)} & \mathcal{C}_{\M,\N,\K}\\ };
\draw[->,font=\scriptsize]
(m-1-1) edge node[auto] {$ T $} (m-2-2)
(m-1-1) edge node[auto] {$ \pi$} (m-2-1)
(m-2-1) edge node[auto] {$ \phi $} (m-2-2);
\end{tikzpicture}

By Lemma~\ref{lem:channels_isom} we have a one-to-one correspondence between $\mathcal{E}_{\M,\N,\K}$ and $ V_{\M,\N,\K}/U(\K)$ (which we denote by
$\mathcal{E}_{\M,\N,\K} \leftrightarrow V_{\M,\N,\K}/U(\K)$) and by the
Choi-Jamiolkowski isomorphism we have $\mathcal{E}_{\M,\N,\K} \leftrightarrow
\mathcal{C}_{\M,\N,\K}$. Together, this implies that $\phi$ is a
bijection.

We have left to show that $\phi^{-1}$ is continuous. We would like to
use the fact that a bijective continuous map from a compact space to a
Hausdorff space has a continuous inverse (cf.\ Lemma~A.52
of~\cite{Lee}).  To make our domain $V_{\M,\N,\K}/U(\K)$ compact, we
enlarge it to $V_{\M,\K\N}/U(\K)$. Let $\tilde{\pi}:V_{\M,\K\N}\mapsto
V_{\M,\K\N}/U(\K)$ denote the quotient map.\footnote{The action of the
  Lie group $U(\K)$ on $V_{\M,\K\N}$ is not free in general.} Since
$\tilde{\pi}$ is continuous and the Stiefel manifold $V_{\M,\K\N}$ is
compact, $\tilde{\pi}(V_{\M,\K\N})=V_{\M,\K\N}/U(\K)$ is compact. We
enlarge the domain of the map $T$ and denote this map by $\tilde{T}:
V_{\M,\K\N} \mapsto \mathcal{C}_{\M,\N,\leqslant \K}$. Note that
$\tilde{T}$ is continuous. Since $\tilde{T}$ is constant on the fibers
of $\tilde{\pi}$, we can define a map $\psi:V_{\M,\K\N}/U(\K)\mapsto
\mathcal{C}_{\M,\N,\leqslant \K}$, such that the following diagram
commutes.

\begin{tikzpicture}[description/.style={fill=white,inner sep=2pt}]
\matrix (m) [matrix of math nodes, row sep=3em,
column sep=4em, text height=1.5ex, text depth=0.25ex]
{ V_{\M,\K\N} &  \\
 {V_{\M,\K\N}/U(\K)} & \mathcal{C}_{\M,\N,\leqslant \K}\\ };
\draw[->,font=\scriptsize]
(m-1-1) edge node[auto] {$ \tilde{T} $} (m-2-2)
(m-1-1) edge node[auto] {$ \tilde{\pi}$} (m-2-1)
(m-2-1) edge node[auto] {$ \psi $} (m-2-2);
\end{tikzpicture}

Note that $\psi$ is a bijection, because $\mathcal{E}_{\M,\N,\leqslant \K} \leftrightarrow V_{\M,\K\N}/U(\K)$ by Remark 4 of~\cite{choi}, and
$\mathcal{E}_{\M,\N,\leqslant \K} \leftrightarrow \mathcal{C}_{\M,\N,\leqslant \K}$ by
the Choi-Jamiolkowski isomorphism. The map $\psi$ is continuous,
because $\tilde{T}$ is continuous (and by the definition of the
quotient topology). Since $V_{\M,\K\N}/U(\K)$ is compact and
$\mathcal{C}_{\M,\N,\leqslant \K}$ is Hausdorff, $\psi^{-1}$ is
continuous.

To see that $\phi^{-1}$ is continuous we restrict $\psi^{-1}$ to
$\mathcal{C}_{\M,\N,\K}$.  For this purposes, we define the inclusion
map $\imath(V)=V:V_{\M,\N,\K}\mapsto V_{\M,\K\N}$. Note that $\imath$
is continuous and open (because $V_{\M,\N,\K}$ is an open subset of
$V_{\M,\K\N}$ by Proposition~\ref{prop:V_M,N,K_is_manifold}). Since
the map $ \tilde{\pi} \circ \imath $ is constant on the fibers of
$\pi$, we can define a map $ \tilde{\imath}$ such that the following
diagram commutes.

\begin{tikzpicture}[description/.style={fill=white,inner sep=2pt}]
\matrix (m) [matrix of math nodes, row sep=3em,
column sep=4em, text height=1.5ex, text depth=0.25ex]
{ V_{\M,\N,\K} & V_{\M,\K\N} \\
 {V_{\M,\N,\K}/U(\K)} & V_{\M,\K\N}/U(\K)\\ };
\draw[->,font=\scriptsize]
(m-1-1) edge node[auto] {$ \imath $} (m-1-2)
(m-1-1) edge node[auto] {$ \tilde{\pi} \circ \imath $} (m-2-2)
(m-1-1) edge node[auto] {$ \pi$} (m-2-1)
(m-2-1) edge node[auto] {$ \tilde{\imath}$} (m-2-2)
(m-1-2) edge node[auto] {$\tilde{\pi}$} (m-2-2);
\end{tikzpicture}

By Lemma~\ref{lem:9.15}, $\tilde{\pi}$ is an open map. Then $
\tilde{\pi} \circ \imath $ is an open and continuous map and hence, we
can conclude that $ \tilde{\imath}$ is continuous and open (and
injective). We are now ready to show that $\phi^{-1}$ is
continuous. Note that the restriction $\tilde{\psi}^{-1}:
\mathcal{C}_{\M,\N,\K} \mapsto V_{\M,\K\N}/U(\K)$ of $\psi^{-1}$ is
still continuous. Because $ \tilde{\imath}$ is injective and
$\psi^{-1}(\mathcal{C}_{\M,\N,\K})=\tilde{\imath}(V_{\M,\N,\K}/U(\K))$
we can define a map $\chi:\mathcal{C}_{\M,\N,\K} \mapsto
V_{\M,\N,\K}/U(\K)$ such that the following diagram commutes.

\begin{tikzpicture}[description/.style={fill=white,inner sep=2pt}]
\matrix (m) [matrix of math nodes, row sep=3em,
column sep=4em, text height=1.5ex, text depth=0.25ex]
{ \mathcal{C}_{\M,\N,\K} & V_{\M,\K\N}/U(\K)  \\
 {} &V_{\M,\N,\K}/U(\K)\\ };
\draw[->,font=\scriptsize]
(m-1-1) edge node[auto] {$ \tilde{\psi}^{-1} $} (m-1-2)
(m-1-1) edge node[auto] {$ \chi$} (m-2-2)
(m-2-2) edge node[auto] {$  \tilde{\imath} $} (m-1-2);
\end{tikzpicture}

The map $\chi$ is continuous because $\tilde{\imath}$ is open and note that $\chi=\phi^{-1}$. \end{proof}

\begin{thm} [Manifold structure for
  $\mathcal{C}^{\textnormal{e}}_{\M,\N,\K}$] \label{thm:C^e_{M,N,K}}
  The set $\mathcal{C}^{\textnormal{e}}_{\M,\N,\K}$ is an open subset
  of $\mathcal{C}_{\M,\N,\K}$. In particular it is a smooth embedded
  submanifold of $\mathcal{C}_{\M,\N,\K}$ (and of
  $\mathbb{R}^{2\M^2\N^2}$). Its dimension is
  $2\M\K\N-\K^2-\M^2$. Moreover,
  $\mathcal{C}^{\textnormal{e}}_{\M,\N,\M}\neq\emptyset$.
\end{thm}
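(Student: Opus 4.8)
The plan is to transport the smooth description of $\mathcal{E}^{\textnormal{e}}_{\M,\N,\K}$ obtained in Lemma~\ref{lem:Extreme_channels} across the homeomorphism $\phi\colon V_{\M,\N,\K}/U(\K)\to\mathcal{C}_{\M,\N,\K}$ of Lemma~\ref{lem:homeomorphism}. The starting observation is that $\phi$ was constructed so that $\phi\circ\pi=T$; hence for $V=[A_1;A_2;\dots;A_{\K}]\in V_{\M,\N,\K}$ the point $\phi(\pi(V))=T(V)=\Gamma(\mathcal{E}_V)$ is exactly the Choi-state of the channel whose (linearly independent, hence minimal) Kraus operators are $A_1,\dots,A_{\K}$. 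In other words, under the bijection $\mathcal{E}_{\M,\N,\K}\leftrightarrow V_{\M,\N,\K}/U(\K)$ of Lemma~\ref{lem:channels_isom}, the map $\phi$ is precisely the restriction to Kraus rank $\K$ of the Choi map $\Gamma$ of Definition~\ref{defi:Choi_Iso}.

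First I would note that $\Gamma$ is the restriction of a linear isomorphism of the ambient real vector spaces, so it is an affine bijection $\mathcal{E}_{\M,\N}\to\mathcal{C}_{\M,\N}$ that preserves Kraus rank (the rank of $\Gamma(\mathcal{E})$ equals the Kraus rank of $\mathcal{E}$, see~\cite{choi}); an affine bijection between convex sets sends extreme points to extreme points, so $\Gamma$ restricts to a bijection $\mathcal{E}^{\textnormal{e}}_{\M,\N,\K}\to\mathcal{C}^{\textnormal{e}}_{\M,\N,\K}$. Next, Theorem~\ref{thm:choi} says that, inside $V_{\M,\N,\K}$, the set $\tilde O$ of Proposition~\ref{prop:open} consists exactly of the Stinespring dilations of extreme channels of Kraus rank $\K$, i.e.\ $T(\tilde O)=\mathcal{C}^{\textnormal{e}}_{\M,\N,\K}$. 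Combining these facts with $\phi\circ\pi=T$ and the definition $O=\pi(\tilde O)$ yields $\phi(O)=\phi(\pi(\tilde O))=T(\tilde O)=\mathcal{C}^{\textnormal{e}}_{\M,\N,\K}$. Since $O$ is open in $V_{\M,\N,\K}/U(\K)$ by Lemma~\ref{lem:Extreme_channels} and $\phi$ is a homeomorphism, $\mathcal{C}^{\textnormal{e}}_{\M,\N,\K}=\phi(O)$ is an open subset of $\mathcal{C}_{\M,\N,\K}$. An open subset of a smooth manifold is an open submanifold of the same dimension whose inclusion is a smooth embedding, and composing with the embedding $\mathcal{C}_{\M,\N,\K}\hookrightarrow\mathbb{R}^{2\M^2\N^2}$ from Lemma~\ref{lem:C_{M,N,K}} shows that $\mathcal{C}^{\textnormal{e}}_{\M,\N,\K}$ is a smooth embedded submanifold of $\mathcal{C}_{\M,\N,\K}$ and of $\mathbb{R}^{2\M^2\N^2}$, of dimension $2\M\K\N-\K^2-\M^2$. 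For the last claim, $\mathcal{C}^{\textnormal{e}}_{\M,\N,\M}=\Gamma\bigl(\mathcal{E}^{\textnormal{e}}_{\M,\N,\M}\bigr)\neq\emptyset$, since $\mathcal{E}^{\textnormal{e}}_{\M,\N,\M}\neq\emptyset$ by Lemma~\ref{lem:Extreme_channels}.

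Most of this is bookkeeping; the one step that genuinely needs care is the identity $\phi(O)=\mathcal{C}^{\textnormal{e}}_{\M,\N,\K}$, i.e.\ checking that the extreme locus singled out intrinsically by the convex geometry of $\mathcal{C}_{\M,\N}$ coincides with the image under $\phi$ of the locus $\tilde O$, which is defined by the Kraus-operator linear-independence condition of Theorem~\ref{thm:choi}. This is exactly where the affineness and rank-preservation of the Choi map $\Gamma$ enter, so I would make those two properties explicit before invoking Theorem~\ref{thm:choi}; everything else then follows from the homeomorphism of Lemma~\ref{lem:homeomorphism} and the elementary fact that open subsets inherit the smooth embedded structure.
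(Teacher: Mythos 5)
Your proposal is correct and follows exactly the route the paper takes: its proof of Theorem~\ref{thm:C^e_{M,N,K}} is the one-line statement that the result ``follows from Lemma~\ref{lem:Extreme_channels} together with Lemma~\ref{lem:homeomorphism}'', and your argument is precisely the spelled-out version of that combination, including the key verification $\phi(O)=T(\tilde O)=\mathcal{C}^{\textnormal{e}}_{\M,\N,\K}$ via the rank-preserving affine bijectivity of the Choi map and Theorem~\ref{thm:choi}. No gaps; if anything, you have supplied detail the paper leaves implicit.
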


\begin{proof}
  Follows from Lemma~\ref{lem:Extreme_channels} together with
  Lemma~\ref{lem:homeomorphism}.
\end{proof}

\begin{rmk} 
An alternative and more explicit characterization of extremality in the Choi-state representation is given in Theorem~4 in~\cite{Friedland}.
\end{rmk}

\section{Decomposition of channels into Convex Combinations of Extreme Channels} \label{sec:convex_comb}

We show that every element $\mathcal{E} \in \mathcal{E}_{\M,\N}$ can
be decomposed into a convex combination of at most $\M^2(\N^2-1)+1$
extreme channels in $\mathcal{E}^{\textnormal{e}}_{\M,\N}$.

\begin{thm}[Convex decomposition] \label{thm:convex} For every channel
  $\mathcal{E} \in \mathcal{E}_{\M,\N}$ there exists a set
  $\{(p_j,\mathcal{E}_j)\}_{j\in \{1,2,\dots,k\}}$, where $k\leqslant
  \M^2(\N^2-1)+1$, $p_j \in [0,1]$, $\sum_{j=1}^k p_j=1$ and
  $\mathcal{E}_j \in \mathcal{E}^{\textnormal{e}}_{\M,\N}$, such that
  $\mathcal{E}=\sum_{j=1}^k p_j \mathcal{E}_j$.
\end{thm}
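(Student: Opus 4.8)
The plan is to invoke Carath\'eodory's theorem for convex sets in finite dimensions, applied to the Choi-state picture established earlier in the paper. The key observation is that the Choi-Jamiolkowski isomorphism identifies $\mathcal{E}_{\M,\N}$ with the compact convex set $\mathcal{C}_{\M,\N}\subset H_{\M\N}$, which sits inside the real vector space of Hermitian $\M\N\times\M\N$ matrices of real dimension $\M^2\N^2$. The affine subspace spanned by $\mathcal{C}_{\M,\N}$ has dimension at most $\M^2\N^2-1$ (one constraint, $\tr C_{AB}=1$, is already fixed; in fact the partial-trace constraint cuts it down much further, but the crude bound suffices for the stated $k$). Carath\'eodory's theorem then says every point of a convex set lying in a $d$-dimensional affine space is a convex combination of at most $d+1$ of its extreme points, so every Choi-state is a convex combination of at most $\M^2\N^2$ extreme Choi-states; being slightly wasteful with the bookkeeping (e.g.\ to account for $H_{\M\N}$ having real dimension $\M^2\N^2$ and not insisting on the affine-hull refinement) gives the stated $2\M^2\N^2+1$.

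Concretely, first I would note that $\mathcal{C}_{\M,\N}$ is compact: it is a closed and bounded subset of $H_{\M\N}$, being defined by the closed conditions $C_{AB}\geqslant 0$ and $\tr_B C_{AB}=\tfrac1\M I$, and bounded since $\tr C_{AB}=1$ together with positivity bounds all entries. It is also convex, as a convex combination of CPTP maps is CPTP and the Choi map is affine. Next, the extreme points of $\mathcal{C}_{\M,\N}$ correspond under the Choi-Jamiolkowski isomorphism exactly to the extreme channels $\mathcal{E}^{\textnormal{e}}_{\M,\N}$, since the isomorphism is an affine bijection and hence preserves the extreme-point structure (this is essentially the content already used implicitly around Proposition~\ref{prop:1} and in Section~\ref{sec:Choi}). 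Then apply the Krein--Milman theorem (or just Carath\'eodory directly, since we are in finite dimensions and the set is compact): every point of $\mathcal{C}_{\M,\N}$ lies in the convex hull of its extreme points, and by Carath\'eodory in the real vector space $H_{\M\N}\cong\mathbb{R}^{\M^2\N^2}$ it is a convex combination of at most $\M^2\N^2+1$ extreme points. Translating back through the Choi-Jamiolkowski isomorphism yields the decomposition $\mathcal{E}=\sum_{j=1}^k p_j\mathcal{E}_j$ with $\mathcal{E}_j\in\mathcal{E}^{\textnormal{e}}_{\M,\N}$ and $k\leqslant \M^2\N^2+1\leqslant 2\M^2\N^2+1$.

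The main obstacle — and the only place requiring genuine care rather than citation — is justifying that Carath\'eodory applies to give a bound in terms of extreme points specifically (Carath\'eodory alone gives convex combinations of arbitrary points of the set; one needs the Krein--Milman/Minkowski refinement that in finite dimensions a compact convex set is the convex hull of its extreme points, after which Carath\'eodory applies to that hull). One must also confirm that the relevant dimension count is honest: the ambient real dimension of $H_{\M\N}$ is $(\M\N)^2=\M^2\N^2$, so Carath\'eodory gives at most $\M^2\N^2+1$ terms, comfortably within the claimed $2\M^2\N^2+1$; if one wishes a cleaner argument one can instead work in the affine hull of $\mathcal{C}_{\M,\N}$, which has dimension $\M^2\N^2-\M^2$ because the $\M^2$ real constraints from $\tr_B C_{AB}=\tfrac1\M I$ are affine and independent, giving the sharper bound $\M^2\N^2-\M^2+1$. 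A secondary, purely technical point is that one is decomposing into elements of $\mathcal{E}^{\textnormal{e}}_{\M,\N}$ (all extreme channels, of any Kraus rank $\leqslant\M$), not into $\mathcal{E}^{\textnormal{e}}_{\M,\N,\K}$ for a fixed $\K$, so no subtlety about the non-closedness of fixed-rank strata arises here.
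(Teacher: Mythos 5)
Your proof follows the same route as the paper's: pass to the Choi picture, note that $\mathcal{C}_{\M,\N}$ is compact and convex, apply Minkowski's theorem to write it as the convex hull of its extreme points, and then apply Carath\'eodory's theorem to bound the number of terms. The only differences are minor: you prove compactness directly (closed and bounded in $H_{\M\N}$) rather than via the homeomorphism with the compact quotient $V_{\M,\M\N^2}/U(\M\N)$ used in the paper, and your dimension count in $H_{\M\N}\cong\mathbb{R}^{\M^2\N^2}$ gives the sharper bound $k\leqslant\M^2\N^2+1$, whereas the paper works in $\mathbb{R}^{2\M^2\N^2}$ and only claims $k\leqslant 2\M^2\N^2+1$.
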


\begin{rmk} 
  It is conjectured by Ruskai and Audenaert~\cite{Ruskai} that
  $k\leqslant \N$ if we allow convex combinations of channels
  $\mathcal{E}_j\in\mathcal{E}_{\M,\N,\leqslant \M}$ (note that
  $\mathcal{E}_{\M,\N,\leqslant \M}$ is equal to the closure of the
  set of all $\M$ to $\N$ extreme channels~\cite{Ruskai}). However, as
  far as we know, this remains unproven.
\end{rmk} 

\begin{proof}[Proof of Theorem~\ref{thm:convex}]
  In the proof of Lemma~\ref{lem:homeomorphism} we saw that
  $V_{\M,\M\N^2}/U(\M\N)$ is compact and homeomorphic to
  $\mathcal{C}_{\M,\N,\leqslant \M\N}$. Therefore,
  $\mathcal{C}_{\M,\N}=\mathcal{C}_{\M,\N,\leqslant \M\N}$ is
  compact. Since $\mathcal{C}_{\M,\N} \subset
  \mathbb{R}^{2\M^2{\N}^2}$ is also convex, by the Minkowski Theorem
  (see for example Theorem~2.3.4 of~\cite{Urruty}),
  $\mathcal{C}_{\M,\N}$ is the convex hull of its extreme points. By
  Carath\'eodory's theorem (see for example Theorem~1.3.6
  of~\cite{Urruty}), we can always find a decomposition
  of the required form for which
  $k\leqslant
  \textnormal{dim}\left(\textnormal{aff}[\mathcal{C}_{\M,\N}]\right)+1=\M^2(\N^2-1)+1$,
  where aff$[\mathcal{C}_{\M,\N}]$ denotes the affine hull of the set
  $\mathcal{C}_{\M,\N}$, i.e.,
  aff$[\mathcal{C}_{\M,\N}]=\{C_{AB}\in H_{\M\N}:\tr_B\,C_{AB}=\frac{1}{\M}I\}$.
\end{proof}

 \section{Application: Implementation of Quantum Channels} \label{sec:lower_bound}
 
 Methods for implementing quantum channels from a system $A$ to a
 system $B$ with low experimental cost as a sequence of
 simple-to-perform operations were considered in~\cite{Wang_qubit,
   Channels, Wang_qudit, Wang_new}. In~\cite{Channels}, a lower bound
 on the number of parameters required for a quantum circuit topology
 that is able to perform arbitrary extreme channels from $m$ to $n$
 qubits (i.e., from a system $A$ of dimension $\MM=2^m$ to a system
 $B$ of dimension $\NN=2^n$) was given. Here, we give a rigorous
 mathematical proof of this statement and strengthen the result by
 showing that a circuit topology that has
 fewer parameters than required by the lower bound is not able to approximate every extreme channel from $m$ to $n$ qubits arbitrarily well. \\

From a mathematical point of view, a quantum circuit topology can be
defined as follows.

\begin{defi}
  A \emph{quantum circuit topology} is a 5-tuple
  $Z:=(\MM,\NN,\CC,p,h)$, where $\MM,\NN,\CC \in \mathbb{N}$, $\NN\CC
  \geqslant \MM$, $p \in \mathbb{N}_0$ and $h:[0,2\pi]^p \mapsto
  V_{\MM,\NN\CC}$ is a smooth function.
\end{defi}

The physical interpretation is the following: We consider a quantum
system $BC$ of dimension $\LL:=\NN\CC$, where an input state for a
quantum channel is given on a subsystem $A$ of dimension $\MM$ and
where the other part of the system $BC$ starts in a fixed pure
state. We think of a fixed sequence of unitary operations performed on
the system $BC$, where the unitaries have $p$ free parameters between
them. Since the parameters corresponds to rotational angles
in~\cite{Channels}, we take them to lie in the interval
$[0,2\pi]$.\footnote{We could replace $2\pi$ by any positive real
  number without changing one of the following statements.} Each
choice of parameters corresponds to the implementation of a certain
isometry from the system $A$ to the system $BC$. The function $h$ maps
each choice of the parameters to the corresponding isometry. After
performing the isometry, the system $C$ is discarded (traced out), and
we read out the output of the channel on the remaining system $B$.
\begin{defi} \label{defi:generated_channels}
The set of quantum channels (in the Choi-state representation) that
can be generated by the quantum circuit topology $Z=(\MM,\NN,\CC,p,h)$
is defined by $H(Z):=T(h([0,2\pi]^p))$, where the map $T$ was
introduced in Definition~\ref{defi:map_T} and where we take the
partial trace over the first $d_C$-dimensional system, i.e., the
partial trace $\tr_C(\cdot)$ corresponds to $\sum_{i=1}^{d_C}
\left(\bra{i} \otimes I\right) \cdot  \left(\ket{i} \otimes
  I\right)$.\footnote{Note that this specification does not restrict
  the physical setting since we can always adapt the map $h$, such that the output of a channel is read out at the last $d_B$ dimensional system.}
\end{defi}

\begin{lem} \label{lem:measure_zero} Let $\K \in \mathbb{N}$ be fixed
  and $O \subset \mathcal{C}_{\MM,\NN,\K}$ open (and non empty). A
  quantum circuit topology $Z=(\MM,\NN,\CC,p,h)$ with $p<
  \textnormal{dim}\left(\mathcal{C}_{\MM,\NN,\K}\right)=2\MM\NN\K-\MM^2-\K^2$
  or $\CC<\K$ can only generate a set of measure zero in $O$, i.e.,
  $H(Z) \cap O$ is of measure zero in $O$.
\end{lem}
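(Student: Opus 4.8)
The plan is to split the statement into its two cases and treat each by a dimension/rank obstruction. Consider first the case $\CC < \K$. Every channel in the image $H(Z) = T(h([0,2\pi]^p))$ arises as $\mathcal{E}_{V_{A\mapsto CB}}(M_A) = \tr_C\, V M_A V^\conjTrans$ for some isometry $V = h(\theta) \in V_{\MM,\NN\CC}$, so it has Kraus rank at most $\CC$ (the Kraus operators are $(\bra{i}\otimes I)V$, of which there are $\CC$). Hence its Choi-state has rank at most $\CC < \K$, so $T(h([0,2\pi]^p)) \cap \mathcal{C}_{\MM,\NN,\K} = \emptyset$ and in particular $H(Z) \cap O = \emptyset$, which is certainly of measure zero in $O$.

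For the main case $p < \textnormal{dim}(\mathcal{C}_{\MM,\NN,\K}) =: D$, the idea is that a smooth map out of a $p$-dimensional parameter space cannot cover a positive-measure subset of a $D$-dimensional manifold when $p < D$. I would first compose $h$ with the smooth map $T$ (smooth by Propositions~\ref{prop:8.22} and~\ref{prop:8.25}, as in the proof of Lemma~\ref{lem:homeomorphism}) to get a smooth map $g := T \circ h : [0,2\pi]^p \to \mathbb{R}^{2\MM^2\NN^2}$, and restrict attention to the part of its image that lands in $\mathcal{C}_{\MM,\NN,\K}$. Since $\mathcal{C}_{\MM,\NN,\K}$ is an embedded submanifold of $\mathbb{R}^{2\MM^2\NN^2}$ (Lemma~\ref{lem:C_{M,N,K}}), near any point of $g^{-1}(\mathcal{C}_{\MM,\NN,\K})$ we may regard $g$ as a smooth map into the manifold $\mathcal{C}_{\MM,\NN,\K}$ itself (again via Proposition~\ref{prop:8.25}). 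Then the key analytic input is the standard fact that a smooth map from a manifold of dimension $p$ to a manifold of dimension $D > p$ has image of measure zero (this is the easy half of Sard's theorem — every point of the source is a critical point since the rank of the differential is at most $p < D$, so the image is the image of the critical set and hence has measure zero). Applying this to $g$ restricted to (a covering by charts of) the open set $g^{-1}(\mathcal{C}_{\MM,\NN,\K}) \subseteq [0,2\pi]^p$ shows that $g([0,2\pi]^p) \cap \mathcal{C}_{\MM,\NN,\K}$ has measure zero in $\mathcal{C}_{\MM,\NN,\K}$, hence its intersection with the open subset $O$ has measure zero in $O$.

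The main technical point to get right is the bookkeeping about which ambient space ``measure zero'' is taken in: one wants measure zero with respect to the smooth-manifold structure on $\mathcal{C}_{\MM,\NN,\K}$ (equivalently, with respect to the $(2\MM\NN\K-\MM^2-\K^2)$-dimensional Hausdorff measure inherited from $\mathbb{R}^{2\MM^2\NN^2}$), and correspondingly ``measure zero in $O$'' means with respect to the restriction of that measure to the open set $O$. With this understood, the only genuine ingredient beyond the results already established in the paper is the Sard-type statement that the smooth image of a lower-dimensional manifold is null; everything else is assembling the smoothness of $T \circ h$ and invoking Lemma~\ref{lem:C_{M,N,K}}. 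I do not expect a serious obstacle here — the case $\CC<\K$ is immediate from the Kraus-rank bound, and the case $p<D$ is a routine application of the measure-zero image principle once the maps are set up as smooth maps between the right manifolds.
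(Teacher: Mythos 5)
Your overall strategy is the same as the paper's: the case $\CC<\K$ is dispatched by the Kraus-rank bound exactly as in the paper, and the case $p<\dim(\mathcal{C}_{\MM,\NN,\K})$ is handled by composing $h$ with $T$ and invoking the ``easy half'' of Sard's theorem (a smooth map from a $p$-dimensional source into a $D$-dimensional manifold with $p<D$ has null image), read off in charts of $\mathcal{C}_{\MM,\NN,\K}$. That is precisely the route the paper takes.

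There is, however, one step in your main case that does not go through as written. You assert that $g^{-1}(\mathcal{C}_{\MM,\NN,\K})$ is an open subset of $[0,2\pi]^p$ and that near any of its points you may regard $g$ as a smooth map into the manifold $\mathcal{C}_{\MM,\NN,\K}$ via Proposition~\ref{prop:8.25}. When $\CC>\K$ this fails: the image of $g$ lies in $\mathcal{C}_{\MM,\NN,\leqslant\CC}$, and since rank is lower semicontinuous a parameter value producing a rank-$\K$ Choi state can be a limit of parameter values producing rank $\K+1\leqslant\CC$. So $g^{-1}(\mathcal{C}_{\MM,\NN,\K})$ need not be open, no neighborhood of such a point maps into $\mathcal{C}_{\MM,\NN,\K}$, and Proposition~\ref{prop:8.25} (whose domain is a manifold) does not apply. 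The paper confronts exactly this issue: it works with $D=F^{-1}(O)$ as an \emph{arbitrary} subset of $\mathbb{R}^p$, introduces a notion of smoothness for maps on arbitrary domains (Definition~\ref{def:smoothness}, via local smooth extensions), proves the measure-zero-image lemma and the codomain-restriction lemma in that generality (Lemmas~\ref{lem:measure_zero_image} and~\ref{lem:restrict_range}), and only then covers the image by charts of $O$ and applies the $p<d$ bound. Your argument is repaired by the same device --- compose with a chart $\phi_\alpha$ of $\mathcal{C}_{\MM,\NN,\K}$ (which extends smoothly to a neighborhood in $\mathbb{R}^{2\MM^2\NN^2}$ by the slice-chart property of embedded submanifolds) so that $\phi_\alpha\circ g$ is smooth on the possibly non-open set $g^{-1}(U_\alpha)$ in the extended sense, and then apply the measure-zero-image principle for such maps --- but as stated the openness claim is a genuine gap rather than mere bookkeeping.
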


\begin{proof}
  The idea of the proof is based on Sard's theorem (similar
  to~\cite{unitary_lowerb1,unitary_lowerb2}). Let us fix a quantum
  circuit topology $Z=(\MM,\NN,\CC,p,h)$, $\K \in \mathbb{N}$ and an
  open set $O \subset \mathcal{C}_{\MM,\NN,\K}$. We define the map
  $T:V_{\MM,\LL}\mapsto \mathcal{C}_{\MM,\NN}$ as in
  Definition~\ref{defi:map_T}, and a map $F=T\circ h$, such that the
  following diagram commutes.

\begin{tikzpicture}[description/.style={fill=white,inner sep=2pt}]
\matrix (m) [matrix of math nodes, row sep=3em,
column sep=4em, text height=1.5ex, text depth=0.25ex]
{ V_{\MM,\LL} & \\
 {[0,2\pi]^p} &  \mathcal{C}_{\MM,\NN}\\ };
\draw[->,font=\scriptsize]
(m-1-1) edge node[auto] {$T $} (m-2-2)
(m-2-1) edge node[auto] {$ h$} (m-1-1)
(m-2-1) edge node[auto] {$ F $} (m-2-2);
\end{tikzpicture}

\textbf{{Case 1 ($\CC<\K$):}}
In this case, note that $H(Z) =F([0,2\pi]^p) \subset T(V_{\MM,\LL})$. But $T(V_{\MM,\LL})$ contains only (Choi-states of) channels of Kraus rank at most $\CC<\K$. Therefore, $H(Z) \cap O= \emptyset$.\\

\textbf{{Case 2 ($\CC\geqslant \K$}):} 
Define the set $S:=F([0,2\pi]^p)\cap O$. To show that $S$ has measure
zero, define the domain $D=F^{-1}(O)$ and the function $\tilde{F}=
\restr{F}{D}: D \mapsto O$. By Sard's theorem (see
Appendix~\ref{app:sard} for the full technical details) we conclude
that $S=\tilde{F}(D)$ is of measure zero in the smooth submanifold $O$
if $\textnormal{dim}(D)\leqslant
p<\textnormal{dim}(O)=\textnormal{dim}(\mathcal{C}_{\M,\N,\K})$.
\end{proof}

\begin{thm} [Strong lower bound] \label{cor:approx} Let $q \in
  \mathbb{N}$ and consider a set of quantum circuit topologies
  $R=\{Z_i=(\MM,\NN,\CCi,p_i,h_i) \}_{i \in \{1,2,\dots, q\}}$ where
  for each $i\in\{1,2,\dots,q\}$ either $p_i<2\MM^2\left(\NN-1\right)$
  or $\CCi<\MM$. Then there exists an extreme channel $ \mathcal{E}_0
  \in \mathcal{C}^{\textnormal{e}}_{\MM,\NN,\MM}$ and a neighborhood
  $B( \mathcal{E}_0) \subset \mathcal{C}_{\MM,\NN}$ of
  $\mathcal{E}_0$, such that for all $\mathcal{E} \in B(
  \mathcal{E}_0)$ we have $\mathcal{E} \notin \bigcup_{i=1}^q H(Z_i)$.
\end{thm}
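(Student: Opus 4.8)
The plan is to combine Lemma~\ref{lem:measure_zero} (the measure-zero statement for a single quantum circuit topology) with the fact that $\mathcal{C}^{\textnormal{e}}_{\MM,\NN,\MM}$ is a nonempty open subset of $\mathcal{C}_{\MM,\NN,\MM}$ of dimension $2\MM^2\NN-\MM^2-\MM^2 = 2\MM^2(\NN-1)$, as established in Theorem~\ref{thm:C^e_{M,N,K}}. First I would fix the target manifold: set $\K=\MM$ and let $O := \mathcal{C}^{\textnormal{e}}_{\MM,\NN,\MM}$, which is a nonempty open subset of $\mathcal{C}_{\MM,\NN,\MM}$ by Theorem~\ref{thm:C^e_{M,N,K}}, hence a smooth manifold of dimension $2\MM^2(\NN-1)$ in its own right. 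The hypotheses of Theorem~\ref{cor:approx} say precisely that each $Z_i$ has $p_i < 2\MM^2(\NN-1) = \textnormal{dim}(\mathcal{C}_{\MM,\NN,\MM})$ or $\CCi < \MM$, which is exactly the hypothesis of Lemma~\ref{lem:measure_zero} with $\K=\MM$ and this choice of $O$. Therefore, for each $i\in\{1,2,\dots,q\}$, the set $H(Z_i)\cap O$ has measure zero in $O$.

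Next I would take the finite union: since a finite union of measure-zero sets is of measure zero, $\left(\bigcup_{i=1}^q H(Z_i)\right)\cap O$ has measure zero in $O$. In particular it is not all of $O$ (as $O$ is a nonempty manifold, hence has positive measure), so its complement $O \setminus \bigcup_{i=1}^q H(Z_i)$ is nonempty. Moreover, I would want this complement to contain an open set: a measure-zero subset of a manifold has empty interior, so $O \setminus \bigcup_{i=1}^q H(Z_i)$ is in fact dense in $O$, but density alone does not immediately give an open ball avoiding the union. The cleanest route is to observe that each $H(Z_i) = T(h_i([0,2\pi]^p))$ is the continuous image of a compact set, hence closed in $\mathcal{C}_{\MM,\NN}$ (using that $\mathcal{C}_{\MM,\NN}$ is Hausdorff, established in the proof of Theorem~\ref{thm:convex}); thus $\bigcup_{i=1}^q H(Z_i)$ is closed in $\mathcal{C}_{\MM,\NN}$, and its complement is open in $\mathcal{C}_{\MM,\NN}$.

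Then I would conclude: pick any point $\mathcal{E}_0$ in the nonempty set $O \setminus \bigcup_{i=1}^q H(Z_i)$; it lies in $\mathcal{C}^{\textnormal{e}}_{\MM,\NN,\MM}$ by construction. Since $\bigcup_{i=1}^q H(Z_i)$ is closed in $\mathcal{C}_{\MM,\NN}$ and $\mathcal{E}_0$ is not in it, its complement is an open neighborhood of $\mathcal{E}_0$ in $\mathcal{C}_{\MM,\NN}$; call a small ball around $\mathcal{E}_0$ inside it $B(\mathcal{E}_0)$. By construction $B(\mathcal{E}_0)\cap \bigcup_{i=1}^q H(Z_i) = \emptyset$, i.e.\ $\mathcal{E}\notin\bigcup_{i=1}^q H(Z_i)$ for all $\mathcal{E}\in B(\mathcal{E}_0)$, which is the claim.

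The main obstacle I anticipate is the passage from ``measure zero in $O$'' to ``there is an honest open ball in $\mathcal{C}_{\MM,\NN}$ avoiding the bad set.'' This requires two ingredients beyond Lemma~\ref{lem:measure_zero}: (i) that each $H(Z_i)$ is closed in $\mathcal{C}_{\MM,\NN}$, which follows from compactness of $[0,2\pi]^{p_i}$ and continuity of $T\circ h_i$ together with Hausdorffness of the ambient space; and (ii) the elementary measure-theoretic fact that a measure-zero (equivalently, closed with empty interior) subset of the manifold $O$ cannot exhaust $O$, so that a point $\mathcal{E}_0$ as required exists. One should be slightly careful that ``measure zero'' is interpreted intrinsically in the smooth manifold $O$ (or in $\mathcal{C}_{\MM,\NN,\MM}$, which has the same dimension near points of $O$), and that $O$ being open in $\mathcal{C}_{\MM,\NN,\MM}$ guarantees it is genuinely a manifold of the stated dimension with nonzero measure; both are immediate from Theorem~\ref{thm:C^e_{M,N,K}}.
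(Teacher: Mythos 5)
Your proposal is correct and follows essentially the same route as the paper's own proof: apply Lemma~\ref{lem:measure_zero} with $\K=\MM$ and $O=\mathcal{C}^{\textnormal{e}}_{\MM,\NN,\MM}$ (nonempty and open by Theorem~\ref{thm:C^e_{M,N,K}}), take the finite union of measure-zero sets to find a point $\mathcal{E}_0$ outside $\bigcup_i H(Z_i)$, and then use that each $H(Z_i)$ is the continuous image of the compact set $[0,2\pi]^{p_i}$, hence closed, to extract the open neighborhood. The only cosmetic difference is that you attribute the Hausdorffness of the ambient space to the proof of Theorem~\ref{thm:convex} rather than simply to $\mathcal{C}_{\MM,\NN}$ being a subset of Euclidean space, which does not affect the argument.
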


This theorem considers a finite set of circuit topologies each of
which either has fewer free parameters than the dimension of the set
of extreme channels or discards a system whose dimension is too low to
generate channels of the maximal Kraus rank for any extreme channel.
The theorem says that there exist extreme channels that cannot be
approximated arbitrarily well using circuit topologies from this set.

\begin{proof}
  By Theorem~\ref{thm:C^e_{M,N,K}}, the set
  $\mathcal{C}^{\textnormal{e}}_{\MM,\NN,\MM}\neq \emptyset$ is an
  open subset of $\mathcal{C}_{\MM,\NN,\MM}$. Hence, $H(Z_i) \cap
  \mathcal{C}^{\textnormal{e}}_{\MM,\NN,\MM}$ is of measure zero in
  $\mathcal{C}^{\textnormal{e}}_{\MM,\NN,\MM}$ by
  Lemma~\ref{lem:measure_zero}. Since a finite union of set of measure
  zero is again of measure zero, we conclude that $S:=\bigcup_{i=1}^q
  \left(H(Z_i) \cap \mathcal{C}^{\textnormal{e}}_{\MM,\NN,\MM}
  \right)$ is of measure zero. Since
  $\mathcal{C}^{\textnormal{e}}_{\MM,\NN,\MM}\neq \emptyset$, we can
  choose a channel $\mathcal{E}_0 \in
  \mathcal{C}^{\textnormal{e}}_{\MM,\NN,\MM} \cap S^{\textnormal{c}}$
  and hence  $\mathcal{E}_0 \in H^{\textnormal{c}}$, where we set
  $H:=\bigcup_{i=1}^q H(Z_i)$. We have left to show that $H$ is
  closed in $\mathcal{C}_{\MM,\NN}$. To see this, note that the map
  $T_i:V_{\MM,\LLi} \mapsto C_{\MM,\NN}$ (as defined in
  Definition~\ref{defi:map_T}) is continuous and hence $F_i:=T_i \circ
  h_i:[0,2 \pi]^{p_i} \mapsto C_{\MM,\NN}$ is also continuous. Since
  $[0,2 \pi]^{p_i}$ is compact, $H(Z_i)=F_i([0,2 \pi]^{p_i})$ is
  closed in $\mathcal{C}_{\MM,\NN}$ and therefore $H$ is also closed.
\end{proof}

\section{Acknowledgements}
We thank Matthias Christandl for useful discussions. R.C.\
acknowledges support from the EPSRC's Quantum Communications Hub (grant number EP/M013472/1).

\appendix

\section{Smoothness of $\tilde{F}$ and a version of Sard's theorem}\label{app:sard}

The domain $D$ of the function $\tilde{F}$ in the proof of
Lemma~\ref{lem:measure_zero} might not be open. We therefore first
clarify the meaning of smoothness in the case of arbitrary domains.

\begin{defi}\label{def:smoothness} 
  Let $D \subset \mathbb{R}^{\M}$. A function
  $F:D \mapsto \mathbb{R}^{\N}$ is \emph{smooth} if for all $p \in D$
  there exists a neighborhood $B(p)$ of $p$ in $ \mathbb{R}^{\M}$, such
  that there exists an extension
  $\hat{F}:B(p) \mapsto \mathbb{R}^{\N}$ of $F$, with $\hat{F}$
  smooth.
\end{defi}

\begin{lem}[Measure of the image] \label{lem:measure_zero_image} Let
  $\M<\N$, $D \subset \mathbb{R}^{\M}$ and $F:D \mapsto
  \mathbb{R}^{\N}$ be smooth. Then $F(D)$ has measure zero in
  $\mathbb{R}^{\N}$.
\end{lem}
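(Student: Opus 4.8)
The plan is to reduce the statement, via the local–extension definition of smoothness (Definition~\ref{def:smoothness}) together with a countable–cover argument, to the classical fact that a smooth map from an open subset of $\mathbb{R}^{\M}$ into $\mathbb{R}^{\N}$ with $\M<\N$ has image of measure zero — itself a standard consequence of Sard's theorem (cf.\ \cite{Lee}).

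First I would unpack the hypothesis: for every $p\in D$ there is an open ball $B(p)\subseteq\mathbb{R}^{\M}$ and a smooth map $\hat F_p:B(p)\mapsto\mathbb{R}^{\N}$ agreeing with $F$ on $B(p)\cap D$. The family $\{B(p)\}_{p\in D}$ covers $D$, and since $\mathbb{R}^{\M}$ is second countable (hence every subspace is Lindel\"of), one extracts a countable subcover indexed by $\{p_k\}_{k\in\mathbb{N}}\subseteq D$. Then $F(D)=\bigcup_k \hat F_{p_k}(D\cap B(p_k))\subseteq\bigcup_k\hat F_{p_k}(B(p_k))$, so — because a countable union of measure-zero sets has measure zero in $\mathbb{R}^{\N}$ — it suffices to show that each $\hat F_{p_k}(B(p_k))$ has measure zero.

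This last point is the classical case. Fixing $k$ and writing $U:=B(p_k)$, $g:=\hat F_{p_k}$, I would either cite the corresponding theorem of \cite{Lee} directly, or argue by ``padding'': define $G:U\times\mathbb{R}^{\N-\M}\mapsto\mathbb{R}^{\N}$ by $G(x,y):=g(x)$; this map is smooth, its differential has rank at most $\M<\N$ at every point, so every point of $U\times\mathbb{R}^{\N-\M}$ is a critical point, and hence by Sard's theorem $g(U)=G(U\times\mathbb{R}^{\N-\M})$ has measure zero in $\mathbb{R}^{\N}$. Combining with the previous paragraph finishes the proof. I do not expect a serious obstacle here: the only real subtlety is handling the possibly non-open domain $D$, which is precisely what the local extensions plus the Lindel\"of property take care of, and one must only be careful to invoke Sard's theorem in a form that applies — the padding trick does this cleanly.
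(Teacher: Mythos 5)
Your argument is correct, but it follows a genuinely different route from the paper's. The paper pads the \emph{set}: it embeds $D$ as $D'=D\times\{0\}\times\dots\times\{0\}\subset\mathbb{R}^{\N}$, which already has measure zero in $\mathbb{R}^{\N}$ because $\M<\N$, checks that $F'=F\circ\pi$ is smooth on $D'$ in the sense of Definition~\ref{def:smoothness}, and then invokes Proposition~6.5 of~\cite{Lee} (a smooth map from a measure-zero subset of $\mathbb{R}^{\N}$ into $\mathbb{R}^{\N}$ has measure-zero image) --- a statement proved by a local Lipschitz estimate, with no appeal to Sard's theorem. You instead dispose of the non-open domain by a Lindel\"of countable-subcover argument, reduce to the classical fact that a smooth map from an open subset of $\mathbb{R}^{\M}$ to $\mathbb{R}^{\N}$ with $\M<\N$ has null image, and prove that by padding the \emph{map} and applying Sard. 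Both proofs are complete: your Lindel\"of step is sound (second countability of $\mathbb{R}^{\M}$ gives a countable subfamily of the balls $B(p)$ covering $D$), the containment $F(D)\subseteq\bigcup_k\hat F_{p_k}(B(p_k))$ is exactly what is needed, and the rank-deficiency argument for $G(x,y)=g(x)$ correctly makes every point critical so that Sard applies. The paper's version buys a more elementary proof (measure-zero preservation under locally Lipschitz maps rather than the full Sard theorem) and is shorter given the cited proposition; yours makes more transparent where the dimension gap $\M<\N$ enters (as rank deficiency of the differential) and uses only the open-domain case of the image-measure lemma, at the cost of invoking Sard and an extra covering step.
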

\begin{proof} We define
  $D':=D\times\{0\}\times\{0\}\times\dots\times\{0\}\subset\mathbb{R}^{\N}$. Note
  that $D'$ lies in a affine subspace of $\mathbb{R}^{\N}$ and is
  therefore of measure zero (in $\mathbb{R}^{\N}$). Let $F':=F \circ
  \pi:D' \mapsto \mathbb{R}^{\N}$, where $\pi:\mathbb{R}^{\N} \mapsto
  \mathbb{R}^{\M}$ denotes the projection map to the first $\M$
  coordinates. To show that $F'$ is smooth, choose $p' \in D'$ and let
  $p:=\pi(p')$. Because $F$ is smooth by assumption, there exists a
  neighborhood $B(p)\subset \mathbb{R}^{\M}$ around the point $p$ and
  a smooth extension of $F$ denoted by $\hat{F}:B(p) \rightarrow
  \mathbb{R}^{\N}$. Hence $\hat{F}':=\hat{F}\circ\pi:B(p)\times
  \mathbb{R}^{\N-\M} \mapsto \mathbb{R}^{\N}$ is a smooth extension of
  $F'$ around $p'$. Therefore, $F'$ is a smooth map and by
  Proposition~6.5 of~\cite{Lee}, we conclude that $F'(D')=F(D)$ is of
  measure zero.
   \end{proof} 
  
\begin{lem}[Restrict the range of a smooth map]\label{lem:restrict_range}
  Let $D \subset \mathbb{R}^{\M}$ be arbitrary and let $N$ be a smooth
  manifold. Let $N' $ be a smooth embedded submanifold of $N$ and $F:D
  \mapsto N$ be a smooth map, such that $F(D) \subset N'$. Then
  $\tilde{F}:D \mapsto N'$ is smooth.
\end{lem}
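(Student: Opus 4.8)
The plan is to mirror the proof of Proposition~\ref{prop:8.25} (Corollary~5.30 of~\cite{Lee}), the only new point being that the domain $D$ is an arbitrary subset of $\mathbb{R}^{\M}$, so smoothness must be read in the sense of Definition~\ref{def:smoothness}: for a map $G:D\to P$ into a smooth manifold $P$ this means that for every $p\in D$ there is a chart $(W,\chi)$ of $P$ with $G(p)\in W$ and a neighbourhood $B$ of $p$ in $\mathbb{R}^{\M}$ with $G(B\cap D)\subset W$ such that $\chi\circ G|_{B\cap D}$ admits a smooth extension to a neighbourhood of $p$ in $\mathbb{R}^{\M}$. Note that a smooth map is in particular continuous, since locally it agrees with a continuous extension, and that this notion is manifestly local, so it suffices to verify smoothness of $\tilde F$ near each fixed $p\in D$. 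One cannot simply produce a smooth $N$-valued extension of $F$ on a Euclidean neighbourhood and then invoke Proposition~\ref{prop:8.25}, because such an extension need not have its image contained in $N'$; instead I will work with coordinate representations adapted to $N'$.

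The steps, in order. First, set $q:=F(p)\in N'$, let $n=\dim N$ and $k=\dim N'$, and — using that $N'$ is an \emph{embedded} submanifold of $N$ — choose a slice chart $(U,\varphi)$ for $N$ with $q\in U$, so that $\varphi(U)$ is open in $\mathbb{R}^n$ and $\varphi(U\cap N')$ is the subset of $\varphi(U)$ on which the last $n-k$ coordinates vanish. Writing $\rho:\mathbb{R}^n\to\mathbb{R}^k$ for the projection onto the first $k$ coordinates, $(U\cap N',\,\rho\circ\varphi|_{U\cap N'})$ is a smooth chart for $N'$. Second, by continuity of $F$ shrink to an open $B\ni p$ in $\mathbb{R}^{\M}$ with $F(B\cap D)\subset U$; since $F(D)\subset N'$ this gives $F(B\cap D)\subset U\cap N'$. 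Third, smoothness of $F:D\to N$ through the chart $\varphi$ yields, after shrinking $B$ further if necessary, a smooth map $\hat G:B\to\mathbb{R}^n$ with $\hat G|_{B\cap D}=\varphi\circ F|_{B\cap D}$. Finally, $\rho\circ\hat G:B\to\mathbb{R}^k$ is smooth and on $B\cap D$ it equals $\rho\circ\varphi\circ F=(\rho\circ\varphi|_{U\cap N'})\circ\tilde F$, i.e.\ it is a smooth extension of the coordinate representation of $\tilde F$ in the chart of $N'$ built in the first step. Hence $\tilde F$ is smooth at $p$; as $p\in D$ was arbitrary, $\tilde F:D\to N'$ is smooth.

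The genuine work is all bookkeeping around the non-open domain: one must pin down once and for all what ``smooth into a manifold on a subset of $\mathbb{R}^{\M}$'' means (the chart-composition version of Definition~\ref{def:smoothness} stated above), shrink the Euclidean neighbourhood $B$ twice — once so that the image of $F$ lies in the chart domain $U$ (using continuity of $F$) and once so that $\varphi\circ F$ extends smoothly — and check that $F^{-1}(U)\cap D$ is a neighbourhood of $p$ in $D$ so that these shrinkings are compatible. The geometric ingredients, namely the existence of slice charts for embedded submanifolds and the smoothness of the coordinate projection $\rho$, are standard and can be quoted from Chapter~5 of~\cite{Lee}; no version of Sard's theorem and none of the manifold-structure results of the main text are needed.
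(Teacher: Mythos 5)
Your argument is correct and is essentially the same as the paper's, which simply defers to the slice-chart proof of Theorem~5.29 / Corollary~5.30 of~\cite{Lee}; you have carried out exactly that argument, with the only adaptation being to read smoothness on the non-open domain $D$ in the sense of Definition~\ref{def:smoothness}. The extra bookkeeping you supply (the chart-composition formulation for manifold-valued maps on arbitrary subsets, the two shrinkings of $B$, and composing with the projection $\rho$ in a slice chart) is precisely what the paper's ``works analogously'' leaves to the reader.
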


\begin{proof}The proof works analogously to the proof of Theorem~5.29
  of~\cite{Lee} (see also Corollary~5.30 of~\cite{Lee}).
  \end{proof}

\begin{proof1}[Proof:  $\tilde{F}(D)$ is of measure zero (completes the proof of case 2 of Lemma~\ref{lem:measure_zero})]
  We use the notation of the proof of
  Lemma~\ref{lem:measure_zero}. First note that the function
  $\restr{F}{D}:D \mapsto \mathbb{C}^{\MM\NN \times \MM\NN}$ is
  smooth. By Lemma~\ref{lem:restrict_range} the function $\tilde{F}: D
  \mapsto O $ is also smooth. To show that $S:=\tilde{F}(D) \subset O
  $ is of measure zero, we choose a collection of smooth charts
  $\{(U_{\alpha},\phi_{\alpha})\}$ of the submanifold $O$ whose
  domains cover $S$. By Lemma~6.6 of~\cite{Lee} we have left to show
  that for all $\alpha$ the image $\phi_{\alpha}(S \cap U_{\alpha})$
  is of measure zero in $ \mathbb{R}^{d}$, where $d$ denotes the
  dimension of $O$. Consider the smooth map
  $\tilde{F}_{\alpha}:=\phi_{\alpha} \circ \tilde{F}: D_{\alpha}:=
  \tilde{F}^{-1}(U_{\alpha}) \rightarrow
  V_{\alpha}:=\phi_{\alpha}(U_{\alpha}) \subset \mathbb{R}^d$.
    
  By Lemma~\ref{lem:measure_zero_image}, the image
  $\tilde{F}_{\alpha}(D_{\alpha})=\phi_{\alpha}(S \cap U_{\alpha})$ is
  of measure zero if $p<d$.
\end{proof1}

\end{document}